\setlist[enumerate]{leftmargin= 0.5 cm}
\setlist[itemize]{leftmargin=0.3 cm}
\theoremstyle{definition}
\newtheorem{defi}{Definition}
\newtheorem{lemma}{Lemma}
\newtheorem{theo}{Theorem}
\newtheorem{coro}{Corollary}
\declaretheorem[style=definition,name=Example,qed=$\blacksquare$]{Example}
\DeclareMathOperator*{\argmax}{arg\,max}
\DeclareMathAlphabet{\mathpzc}{OT1}{pzc}{m}{it} 
\newcounter{procedure}
\def\BibTeX{{\rm B\kern-.05em{\sc i\kern-.025em b}\kern-.08em
    T\kern-.1667em\lower.7ex\hbox{E}\kern-.125emX}}
\begin{document}
\title{Online Multicast Traffic Engineering for Software-Defined Networks 
}
\author{Sheng-Hao Chiang\IEEEauthorrefmark{1}, Jian-Jhih Kuo\IEEEauthorrefmark{1}, Shan-Hsiang Shen\IEEEauthorrefmark{2}, De-Nian Yang\IEEEauthorrefmark{1}, and Wen-Tsuen Chen\IEEEauthorrefmark{1}\IEEEauthorrefmark{4}\\
\IEEEauthorrefmark{1}Inst. of Information Science, Academia Sinica, Taipei, Taiwan\\
\IEEEauthorrefmark{2}Dept. of Computer Science \& Information Engineering, \\National Taiwan University of Science \& Technology, Taipei, Taiwan\\
\IEEEauthorrefmark{4}Dept. of Computer Science, National Tsing Hua University, Hsinchu, Taiwan\\
\{jiang555,lajacky\}@iis.sinica.edu.tw, sshen@csie.ntust.edu.tw, \{dnyang,chenwt\}@iis.sinica.edu.tw
\thanks{
This work is supported by Ministry of Science and Technology of Taiwan under contracts No. MOST 105-2622-8-009-008, No. MOST 105-2221-E-001-001- and No. MOST 105-2221-E-001-015-MY3 and by Academia Sinica Thematic Research Grant.
}
}
\maketitle

\begin{abstract}
Previous research on SDN traffic engineering mostly focuses on static traffic, whereas dynamic traffic, though more practical, has drawn much less attention. Especially, online SDN multicast that supports IETF dynamic group membership (i.e., any user can join or leave at any time) has not been explored. Different from traditional shortest-path trees (SPT) and graph theoretical Steiner trees (ST), which concentrate on routing one tree at any instant, online SDN multicast traffic engineering is more challenging because it needs to support dynamic group membership and optimize a sequence of correlated trees without the knowledge of future join and leave, whereas the scalability of SDN due to limited TCAM is also crucial. In this paper, therefore, we formulate a new optimization problem, named Online Branch-aware Steiner Tree (OBST), to jointly consider the bandwidth consumption, SDN multicast scalability, and rerouting overhead. We prove that OBST is NP-hard and does not have a $|D_{max}|^{1-\epsilon}$-competitive algorithm for any $\epsilon >0$, where $|D_{max}|$ is the largest group size at any time. We design a $|D_{max}|$-competitive algorithm equipped with the notion of the budget, the deposit, and Reference Tree to achieve the tightest bound. The simulations and implementation on real SDNs with YouTube traffic manifest that the total cost can be reduced by at least 25\% compared with SPT and ST, and the computation time is small for massive SDN.
\end{abstract}

\section{Introduction}
Software-defined networking (SDN) provides a promising architecture for flexible network resource allocations to support a massive amount of data transmission \cite{OpenFlow}. SDN separates the control plane and data plane, allowing the control plane to be programmable to efficiently optimize the network resources. OpenFlow \cite{OpenFlow} in SDN includes two major components: controllers (SDN-Cs) and forwarding elements (SDN-FEs). Controllers derive and install forwarding rules according to different policies in the control
plane, whereas {forwarding elements in switches} then deliver packets according to the forwarding rules. Compared with the current Internet techniques/multicast, routing paths no longer need to be the shortest ones, and the paths can be embedded more flexibly inside the network. Since SDN provides a better overview of network topologies and enables centralized computation, traditional theoretical studies on SDN traffic engineering mostly focus on static traffic flows. Recently, allocations of online dynamic unicast have drawn increasing attention \cite{DynamicRouting-SDN,2SegmentRouting}.
However, online multicast traffic engineering that supports \emph{IETF dynamic group membership} \cite{RFC2236} in the current Internet multicast has not been explored for SDN\footnote{{\color{black}To implement multicast, the SDN controller installs a rule with multiple output ports in its action field into each branch node. Then, the branch node clones and forwards corresponding packets to the output ports}}. Dynamic group membership plays a crucial role in practical multicast services because it allows any user to join and leave a group at any time (e.g., for a conference call or video broadcast).

Compared to unicast, multicast has been demonstrated in empirical studies to effectively reduce the overall bandwidth consumption by around 50\% \cite{BenefitsOfMulticast}. It employs a multicast tree, instead of disjoint unicast paths, from the source to span all destinations of a multicast group. The current Internet multicast standard, i.e., PIM-SM \cite{RFC4601}, employs a shortest-path tree (SPT), and therefore does not support traffic engineering since the path from the source to each destination is fixed to be the shortest one. SPT tends to lose many good opportunities to reduce the bandwidth consumption by sharing more common edges among the paths to different destinations. In contrast, a Steiner tree (ST) \cite{STProblems} in Graph Theory provides flexible routing to minimize the bandwidth consumption (i.e., total number of edges). 
Nevertheless, ST is not designed for online multicast with dynamic group membership because 1) re-computing the Steiner tree after any joins or leaves is computationally intensive, especially for a large group, and 2) the overhead (e.g., installing new rules to modify Group Table in OpenFlow) incurred from rerouting the previous Steiner tree to the latest one is not minimized.

Moreover, the scalability problem for SDN multicast is much more serious since for each network with $n$ nodes, the number of possible multicast groups is $O(2^n)$, whereas the number of possible unicast connections is only $O(n^2)$. Therefore, it is more difficult for SDN-FEs to store the forwarding entries of all multicast groups in OpenFlow Group Table due to the limited TCAM size\footnote{{\color{black}SDN switches match more fields in packets that implies that the length of each rule entry is much longer than traditional switches, so SDN switches can support fewer rule entries with the same TCAM size.}}  \cite{DIFANE,multicastTEforSDN}. To remedy this issue, a promising way is to exploit the branch forwarding technique \cite{UnicastTunnelForMulticast,UnicastTunnelForMulticast2,UnicastTunnelForMulticast3}, which stores the multicast forwarding entries in branch nodes (with at least three incident edges), instead of every node of a tree. It can effectively remedy the multicast scalability problem since packets are forwarded in a unicast tunnel from the logic port of a branch node in SDN-FE \cite{OpenFlow} to another branch node. All nodes in the unicast tunnel are no longer necessary to maintain the forwarding entry of the multicast group. Nevertheless, the number and positions of branch nodes need to be carefully selected, but those important factors are not examined in SPT and ST.

Different from traditional traffic engineering for static traffic, for \emph{scalable online multicast} in SDN, it is necessary to minimize not only the tree size and branch node number of a tree at any instant but also the rerouting overhead\footnote{{\color{black}In SDN architecture, a controller reroutes traffic by sending control messages to update forwarding rules in switches. This procedure creates the following overheads:1) the controller consumes CPU power to process the control messages, 2) the bandwidth between the controller and switches, and 3) switches need to decode the control messages and the CPUs in switches are usually wimpy~\cite{devoflow}.}} to tailor a sequence of trees over time, while the future user joining and leaving are unknown during the rerouting. In this paper, therefore, we formulate a new optimization problem, named Online Branch-aware Steiner Tree (OBST), for SDNs with dynamic group membership to consider the above factors. We prove that OBST is NP-hard. Note that traditional ST problem is also NP-hard but approximable within 1.39 \cite{SteinerTreeBestRatio}. However, OBST is more challenging because all above costs from a sequence of correlated trees need to be carefully examined. Indeed, we prove that for any $\epsilon >0$, there is no $|D_{max}|^{1-\epsilon}$-competitive algorithm for OBST unless P $=$ NP, where $D_{max}$ is the largest destination set at any time. For OBST, we design a $|D_{max}|$-competitive algorithm,\footnote{Theoretically, a 
$c$-competitive algorithm \cite{OnlineCompetitiveAnalsis} is an online algorithm that does not know the future information at any time, and the performance gap between the algorithm and the optimal offline algorithm (i.e., an oracle that always know the future) is at most
$c$. In this paper, the future information is the dynamic group membership in the future.} called Online Branch-aware Steiner Tree Algorithm (OBSTA)\footnote{{\color{black}A rule is installed in each edge switch to redirect IGMP join messages from destinations to an SDN controller. Then, OBSTA on the top of the SDN controller connects the destinations to a multicast tree by asking the SDN controller to install forwarding rules into switches in the tree.}}, to achieve the tightest bound. To carefully examine the temporal correlation (which has not been exploited in all static multicast algorithms) of trees, we introduce the notion of \emph{budget} and \emph{deposit}, and a larger budget and deposit allows OBSTA to reroute the tree more aggressively for constructing a more bandwidth efficient tree. Moreover, we construct Reference Tree for OBSTA to guide the rerouting process for achieving the performance bound. {\color{black}Simulation and implementation in a real SDN with YouTube traffic manifest that}
OBSTA can reduce the total cost by 25\% compared with SPT and ST.



The rest of this paper is organized as follows. Section \ref{sec: related work} introduces the related work. Sections \ref{sec: problem formulation} and \ref{sec: HardnessResults} formally define OBST and describe the hardness results. We present the algorithm design of OBSTA in Section \ref{sec: algorithm}, and Section \ref{sec: evaluation} shows the simulation and implementation results on real topologies. Finally, Section \ref{sec: conclusion} concludes this paper.

\section{Related Work}\label{sec: related work}
Traffic engineering for unicast in SDN has attracted a wide spectrum of attention. Mckeown et al. \cite{OpenFlow} study the performance of OpenFlow in heterogeneous SDN switches. Jain et al. \cite{B4} develop private WAN of Google Inc. with the SDN architecture.
Agarwal et al. \cite{TE-SDN} present unicast traffic engineering in an SDN network with only a few SDN-FEs, while other routers followed a standard routing protocol, such as OSPF. By leveraging SDN and reconfigurable optical devices, Jin et al. \cite{BulkTransferSDOpticalWAN} design centralized systems to jointly control the network layer and the optical layer, while Jia et al. \cite{MinMakespanUnicastSDOpticalWAN} present approximation algorithms for transfer scheduling. Gay et al. \cite{QuickReactUnexpectedChanges} study traffic engineering with segment routing in wide area networks with network failures. Qazi et al. \cite{SIMPLE-fying} design a new system to effectively control the SDN middleboxes. Cohen et al. \cite{TableUtilization-SDN} maximize the network throughput with limited sizes on forwarding tables. However, the above studies only focus on  unicast traffic engineering, and multicast traffic engineering in SDN has attracted much less attention.

For multicast, the current standard IETF PIM-SM \cite{RFC4601}, which is designed to support the dynamic multicast group membership in IETF IGMP \cite{RFC2236}, relies on unicast routing protocols to discover the shortest paths from the source to the destinations for building a shortest-path tree (SPT). However, SPT is not designed to support traffic engineering. Although the Steiner tree (ST) \cite{STProblems} minimizes the tree cost, ST is computationally intensive and is not adopted in the current Internet standard. Overlay ST \cite{OverlayST1,OverlayST2}, on the other hand, is an alternative to constructing a bandwidth-efficient multicast tree in the P2P environment. However, the path between any two P2P clients is still a shortest path in Internet. Recently,
Huang et al. optimized the routing of multicast trees in SDN \cite{ScalableMulticastforSDN,reliableMulticastforSDN,multicastTEforSDN}.
Zhu et al. \cite{Multicast-VirtualDCNetwork} study the multicast tree with virtualized software switches. However, the above studies are designed for static multicast trees at any instant, instead of optimizing a sequence of dynamic trees with varying destinations. 

For dynamic traffic, dynamic unicast has been explored in \cite{DynamicRouting-SDN,CompetitiveRouting,failure_recovery_rerouting,power_saving_rerouting} to rerouting the traffic for failure recovery and power saving (i.e., to turn off more links and nodes). Previous literature on dynamic multicast focuses on protocol and architecture design, instead of theoretical study, for other kinds of networks. Baddi et al.\cite{dynamic_multicast_mobile_ip} revise PIM-SM to support multicast join, leave, and handover in mobile wireless networks. Markowski~\cite{dynamic_multicast_optical_networks} considers spectrum and modulation setup for dynamic multicast in optical networks. Xing et al. ~\cite{dynamic_multicast_network_coding} explore dynamic multicast with network coding to minimize the coding cost. However, the dynamic group membership in IETF IGMP has not been considered in any SDN theoretical research for multicast traffic engineering.

\section{Problem Formulation} \label{sec: problem formulation}
In this paper, we formulate a new online multicast problem, namely \emph{Online Branch-aware Steiner Tree (OBST)}. It jointly considers tree routing, rerouting, and scalability in SDNs. More specifically, the network is modeled as a weighted graph $G=\{V,E\}$ with a source $s\in V$ and a set of candidate destinations $D\subseteq V$, where {\color{black}$w(e)$ is the weight of edge $e\in E$, respectively.}
{\color{black}
A larger weight can be assigned to 
a link with the higher delay and higher loss rate (due  to congestion) \cite{LinkWeightSetting
}.}
For online multicast, each destination is allowed to join and leave at the beginning of each time slot $t_i$,\footnote{
We assume that time is divided into discrete time slots, and each user request arrives in a certain time slot (i.e., a short duration) \cite{timeSlotAssumption,timeSlotAssumption2}. This assumption is reasonable in networks where connections are active for a time period of seconds, minutes, or even hours. The size of the time slot can be scaled according to the different settings and analyses. Delay sensitive applications such as video conference require shorter time duration. Other applications such as web browsing can tolerate longer time duration.} and let $D_i$ denote the set of destinations during $t_i$, where $D_i\subseteq D$ and $D_{max}=\argmax_{D_i}\{|D_i|\}$. At $t_i$, the SDN controller builds a multicast tree $T_i$ by tailoring $T_{i-1}$ to connect the source $s$ and the destinations in $D_i$, where the future information $D_j$ is unknown for every $j>i$. The tree cost $w(T_i)$ is the sum of edge weights in the multicast tree $T_i$ (i.e., $w(T_i)=\sum_{e\in T_i} w(e)$). To consider the scalability of SDN (i.e., TCAMs), {\color{black}
the branch cost $b_{T_i}$ in $t_i$ is 
the total number (or the total weighted number) of branch nodes in $T_i$,}
where each branch node has at least three neighbors in $T_i$.\footnote{Here the source is regarded as a branch node since it has to maintain Group Table in OpenFlow.} Moreover, at $t_i$, for the destinations in $t_{i-1}$ that will also stay in this time slot, it is necessary to reroute the tree spanning them because the tree routing $T_{i-1}$ may no longer be bandwidth efficient. To consider the rerouting overhead, we first define the operation \emph{pruning} of a tree and the symmetric difference of two graphs, and then define the rerouting cost $rc_i$ in $t_i$ according to \cite{ConsistentMigration-SDN} .

\begin{defi}
Given a node set $L\subseteq D$ and a tree $T$ spanning the destinations in $D$, the operation, \emph{pruning}, denoted by $T\ominus L$, outputs a maximum subtree of $T$, where each leaf is either a destination in $D\setminus L$ or the source of $T$.
\end{defi}

\begin{defi}
Given two graphs $G_1$, $G_2$, the symmetric difference $G_1\triangle G_2$ of the two graphs is the graph with the edges in $G_1\cup G_2$ but not in $G_1\cap G_2$.
\end{defi}

In other words, $T\ominus L$ considers the case that destination set $L$ leaves the tree, and the path connecting each $d$ in $L$ to the closest branch node is thereby removed from $T$. Moreover, each edge in $G_1\triangle G_2$ appears in either $G_1$ or $G_2$.


\begin{defi}
The rerouting cost in $t_i$ is the total edge cost of the symmetric difference $(T_{i-1}\ominus (D_{i-1}\setminus D_{i}) ) \triangle (T_i\ominus (D_{i} \setminus D_{i-1}))$.
In other words, $rc_i=\sum_{e\in (T_{i-1}\ominus (D_{i-1}\setminus D_{i})) \triangle (T_i\ominus(D_{i}\setminus D_{i-1}))}w(e)$.
\end{defi}

The first part of the symmetric difference represents the old tree without the leaving destinations, and the second part is the new tree without the joining users. Therefore, the symmetric difference is the tree topology change (i.e., rerouting efforts) for the destinations staying in both time slots.


\begin{Example}
Fig. \ref{ProblemExample} presents an example of the rerouting cost for $d_1,d_2\in D_{i-1}$ and $d_2,d_3\in D_i$. Figs. \ref{ProblemExample}(a) and \ref{ProblemExample}(d) show $T_{i-1}$ and $T_i$, respectively. Figs. \ref{ProblemExample}(b) and \ref{ProblemExample}(c) show the old tree after $d_1$ leaves and the new tree if $d_3$ was not a destination, respectively.
Fig. \ref{ProblemExample}(e) shows the symmetric difference with $rc_i=23$. It reroutes $d_2$ by removing the old path in Fig. \ref{ProblemExample}(b) and adding the new path in Fig. \ref{ProblemExample}(c).
\end{Example}

\begin{figure}[tb]
\centering
\includegraphics[scale=0.35] {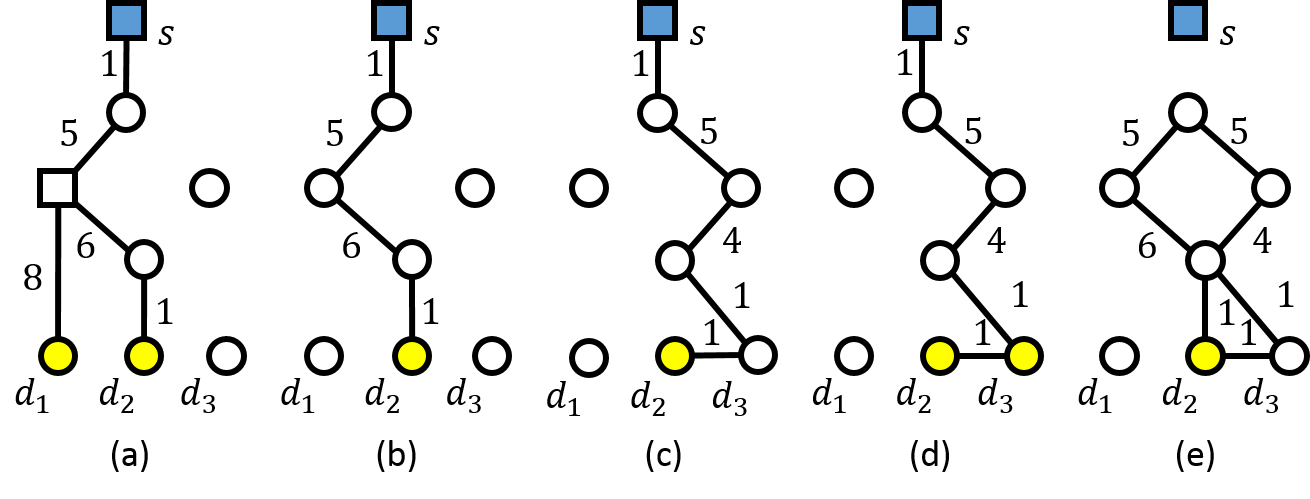}
\vspace{-1.5 mm}
\caption{(a) $T_{i-1}$ for $d_1$ and $d_2$.  (b) $T_{i-1}\ominus \{d_1\}$. (c) $T_i\ominus \{d_3\}$. (d) $T_{i}$ for $d_2$ and $d_3$. (e) Symmetric difference of $T_{i-1}\ominus \{d_1\}$ and $T_i\ominus \{d_3\}$.}
\label{ProblemExample}
\end{figure}

\begin{defi}
Given $G=\{V,E\}$, a source $s\in V$, and a collection of destination sets, $\{D_1,D_2,...,D_{n}\}$ for time $t_1,t_2,...,t_n$, the OBST problem finds a set of trees $T=\{T_{1},...,T_{n}\}$ such that 1) $s$ is connected to all destinations in $D_i$ via $T_i$ for every $t_i$, and 2) the total cost
$\sum_{i=1}^{n} (w(T_{i})+\alpha \times b_{T_{i}}+\beta \times rc_{i})$ is minimized, where $\alpha\geq 0$ and $\beta\geq 0$. In the \emph{offline} OBST problem, $D_j$ is available at any $t_i$ with $i<j$, but the above information is not available in the \emph{online} OBST problem studied in this paper.
\end{defi}

Therefore, OBST aims to find a series of multicast trees such that the weighted sum of the overall tree costs, the branch costs, and rerouting costs is minimized, where $\alpha$ and $\beta$ are tuning knobs \cite{ScalableMulticastforSDN,reliableMulticastforSDN,multicastTEforSDN} for network operators to differentiate the importance of network load, scalability, and rerouting overhead. Compared with the SPT and ST, OBST not only provides the flexible routing of trees but also carefully examines the practical issues of OpenFlow in dynamic SDNs. 

Fig. \ref{Comparison} presents an example to compare SPT, ST, and OBST with one source $s$ and two destinations $d_1$ and $d_2$ in Fig. \ref{Comparison}(a), where $D_1=\{d_1\}$, $D_2=\{d_1,d_2\}$, $\alpha=1$, and $\beta=0.2$. SPT, ST, and OBST are the same in $t_1$ with the total cost as $(7.5+2.5)+1\times 1+0.2\times 0=11$ in Fig. \ref{Comparison}(b). However, after $d_2$ joins, SPT becomes $(7.5+2.5+6.2+4)+1\times 1+0.2\times 0=21.2$ in Fig. \ref{Comparison}(c), and ST becomes $(6.2+4+4)+1\times 2+0.2\times (7.5+2.5+6.2+4)=20.24$ in Fig. \ref{Comparison}(d). In contrast, OBST only incurs $(7.5+2.5+2+4)+1\times 2+0.2\times 0=18$ in Fig. \ref{Comparison}(e). Note that OBST does not choose an alternative $(7.5+2.5+6)+1\times 1+0.2\times 0=17$ in Fig. \ref{Comparison}(f) because if $d_1$ leaves the tree, $d_2$ is inclined to induce dramatic rerouting.

\begin{figure}[tb]
\centering
\includegraphics[scale=0.35] {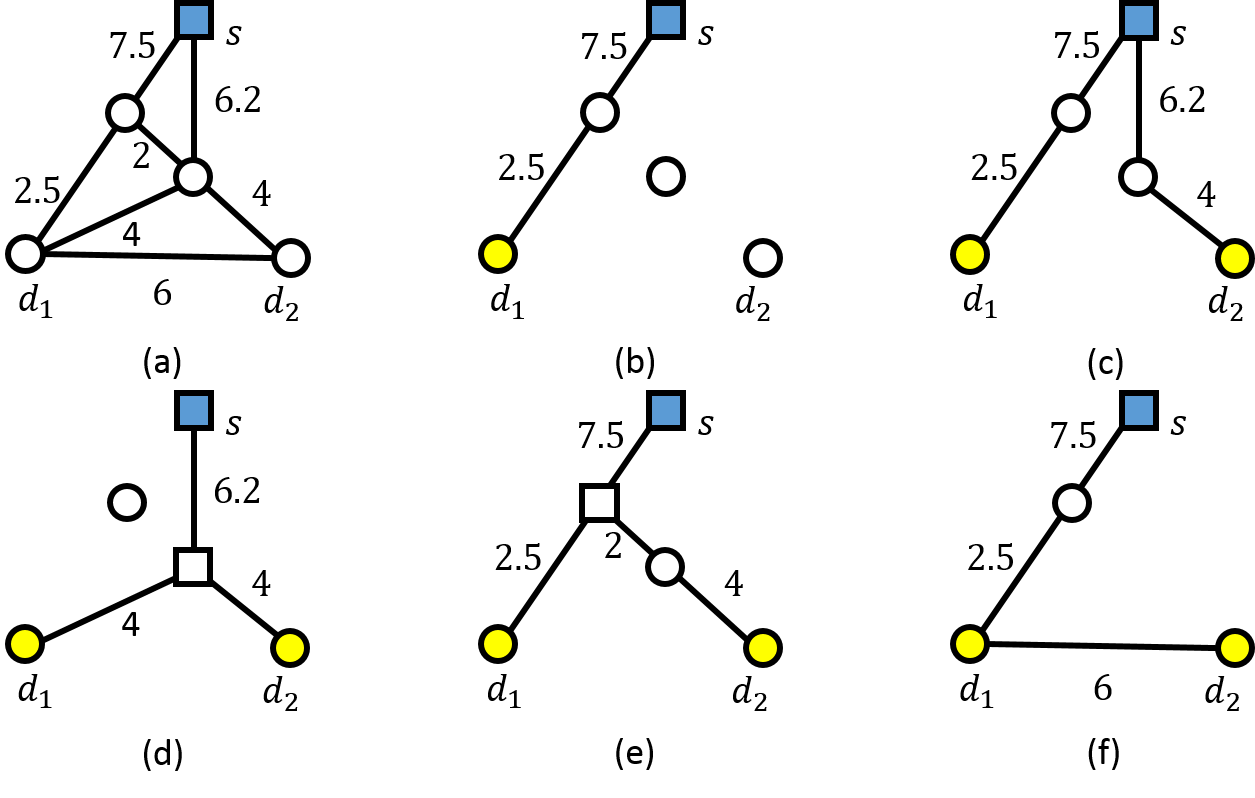}
\vspace{-1.5 mm}
\caption{(a) Input network $G$. (b) $T_1$ of SPT, ST, and OBST. (c) $T_2$ of SPT. (d) $T_2$ of ST. (e) $T_2$ of OBST. (f) Alternative $T_2$ of OBST.}
\label{Comparison}
\end{figure}


\section{Hardness Results\label{sec: HardnessResults}}

The OBST problem is more challenging than the Steiner tree problem since ST does not consider the SDN scalability and examine the temporal correlation of trees. In the following, we first show that OBST is very challenging in Complexity Theory by proving that the offline OBST problem does not have a $|D_{max}|^{1-\epsilon}$-approximation algorithm for any $\epsilon>0$,\footnote{By contrast, ST can be approximated within $1.39$ \cite{SteinerTreeBestRatio}.} with a gap-introducing reduction from the Hamiltonian Path (HP) problem, where $D_{max}$ is the largest destination set at any time. Afterward, we show that the online OBST problem does not have a $|D_{max}|^{1-\epsilon}$-competitive algorithm.


\begin{defi} \label{defi: Hamiltonian path}
Given any graph $G_H=\{V_H,E_H\}$ and a vertex $y\in V_H$, the HP problem finds a path starting from $y$ to visit every other vertex exactly once.
\end{defi}

\begin{theo} \label{theo: hardness}
For any $\epsilon >0$, no $|D_{max}|^{1-\epsilon}$-approximation algorithm exists for the offline OBST problem unless P $=$ NP.
\end{theo}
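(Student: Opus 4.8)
The plan is a gap-introducing, polynomial-time reduction from the Hamiltonian Path problem (HP), which is NP-complete. From an instance $(G_H,y)$ of HP I will build an offline OBST instance $\mathcal I$ and a threshold $c$, polynomial in $|V_H|$, such that: if $G_H$ has a Hamiltonian path starting at $y$ then $\mathrm{OPT}(\mathcal I)\le c$, and otherwise $\mathrm{OPT}(\mathcal I)\ge c\cdot|D_{max}|^{1-\epsilon}$. Given such a reduction, running any $|D_{max}|^{1-\epsilon}$-approximation algorithm on $\mathcal I$ and comparing its value to $c$ decides HP in polynomial time, so no such algorithm exists unless P $=$ NP.

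For the construction I would let $s$ be a fresh source vertex joined to $y$ by a single unit-weight ``cheap'' edge, take the destination set $D$ to be the vertices of $V_H$ (replicated a bounded number of times if one wishes to treat $|D_{max}|$ as a free parameter), keep every edge of $E_H$ as a unit-weight cheap edge between the corresponding destinations, and---purely to keep every requested tree feasible---add a ``detour'' edge of large weight $W$ from $s$ to every destination, where $W$ is a suitably large fixed power of $|V_H|$ (depending on $\epsilon$); detour edges may equally be realised as paths through fresh Steiner vertices. The crucial calibration is to use the \emph{weighted} branch-node cost with the source given node-weight $0$ and every other vertex node-weight $1$, and to set $\alpha := W$, so that a branch node other than $s$ costs as much as a detour edge while $s$ itself---which the model always counts as a branch node---is free; $\beta$ may be any nonnegative value. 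If one wants the temporal structure to play a role rather than taking a single slot, let the destinations arrive one per slot, $D_i=\{d_1,\dots,d_i\}$; otherwise a single slot with $D_1=D$ already suffices.

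Completeness is immediate: if $y=v_0,v_1,\dots,v_{|V_H|-1}$ is a Hamiltonian path of $G_H$, relabel the destinations along it and let $T_i$ be the prefix of the path $s\,v_0v_1\cdots$ that spans $D_i$; every $T_i$ uses only cheap edges, has $s$ as its sole branch node (cost $0$), and is obtained from $T_{i-1}$ by appending one cheap edge, so $rc_i=0$, and the total cost is $\sum_iw(T_i)=O(|V_H|^2)=:c$. For soundness I would argue the contrapositive: in any feasible solution, look at the tree $T$ in the slot where all destinations are present and at $\deg_T(s)$. If $\deg_T(s)\ge 2$ then $T$ uses a detour edge and $w(T)\ge W$. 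If $\deg_T(s)=1$, then $T-s$ is a spanning tree of the cheap ($E_H$-)structure on $D$, attached to $s$ at $y$; since $G_H$ has no Hamiltonian path from $y$, this forces a non-source branch node---either a vertex of degree $\ge 3$ inside $T-s$, or $y$ itself, whose degree becomes $3$ once its edge to $s$ is added---so the weighted branch cost is at least $\alpha=W$. Either way the total cost is at least $W$, and choosing $W$ a large enough power of $|V_H|$ makes $W\ge c\cdot|D_{max}|^{1-\epsilon}$.

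I expect the main work to be in the soundness step, and specifically in making the penalty truly multiplicative: an unweighted branch count cannot do it, because the always-present source branch node then inflates the honest solution too and only a constant-factor (roughly $2$) gap survives; the zero-weight source in the weighted cost is exactly what breaks that barrier, and this is also why $s$ is placed outside $V_H$ as a pendant of $y$ rather than identified with $y$ (otherwise a ``spider'' centred at $y$ or at $s$ would be a cheap unpenalised escape in the no-instance). One then has to verify that the $\deg_T(s)=1$ analysis is airtight---that no Hamiltonian path from $y$ really does force a non-source branch node---and that mixing a few detour edges with cheap edges cannot undercut both bounds. The remaining points are routine bookkeeping: confirming $\mathcal I$ has polynomial size, fixing the exponent in $W$ (a value like $|V_H|^{\lceil 2/\epsilon\rceil}$) so that $W/c$ dominates $|D_{max}|^{1-\epsilon}$ for all large $|V_H|$ with small instances handled directly, and---if the temporal version is used---checking that the rerouting term only ever helps the honest solution.
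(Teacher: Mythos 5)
Your overall gap-reduction template from Hamiltonian Path is the same as the paper's, and your completeness argument and the degree-case analysis at $s$ in the soundness step are sound. The genuine gap is in how you turn the one forced non-source branch node into a \emph{multiplicative} penalty. You do it by switching to the weighted branch cost and assigning the source node-weight $0$, but that is not the problem the theorem is about: in OBST as defined, the source is always counted as a branch node (the footnote to the branch-cost definition), and this is not cosmetic --- the paper's matching upper bound (Theorem~\ref{theo: ratio}) explicitly uses $b_{T_i}^*\geq 1$, so a zero-weight source changes the problem and would leave the hardness result disconnected from the algorithmic one. Moreover, Definition of OBST takes as input only $G=\{V,E\}$ with edge weights, a source, and the destination sequence; per-node branch weights are not an input you are free to set in the reduction. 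For the actual (unweighted-count) problem, your single-copy construction gives, as you yourself concede, only a constant-factor gap ($\approx m+W$ versus $\approx m+2W$), which proves nothing about $|D_{max}|^{1-\epsilon}$-approximability.

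The missing idea is polynomial clone amplification rather than recalibration of the branch weights. The paper attaches $m^{p}$ disjoint copies of $G_H$ to $s$ (each via its copy of $y$), with $p$ chosen as a function of $\epsilon$, and has destinations arrive one per slot, filling the copies one after another and then leaving in reverse order. In the yes-instance every $T_i$ is a spider of Hamiltonian paths, so $b_{T_i}=1$ (just the source) in every slot; in the no-instance every \emph{fully arrived} copy forces its own extra branch node, so $b_{T_i}\geq \lfloor i/m\rfloor+1$. With $\alpha$ huge the branch term dominates, and the ratio of the two totals is on the order of the number of copies, $m^{p}$, while $|D_{max}|=m^{p+1}$; choosing $p>\frac{1-\epsilon+\log_m 2}{\epsilon}$ makes $m^{p}\geq |D_{max}|^{1-\epsilon}$. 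Your detour edges and the pendant placement of $s$ are then unnecessary, and no reweighting of the source is needed. If you want to salvage your write-up, replace the ``weighted branch cost with a zero-weight source'' device by this replication-plus-staggered-arrival device; the rest of your argument (the HP decision procedure from the gap, the degree analysis forcing a branch node inside each copy) carries over essentially unchanged.
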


\begin{proof}
We prove the theorem with a gap-introducing reduction from the HP problem. For any instance $G_H$, we construct an OBST instance $G$ with the following goals: 1) if $G_{H}$ has a Hamiltonian path starting at $y$, $\mathrm{OPT}(G)\leq2\alpha m( m^{p}+1)$; 2) otherwise, $\mathrm{OPT}(G)>2\alpha m( m^{p}+1) |D_{max}|^{1-\epsilon}$, where  $m$ is the number of nodes in $G_{H}$ and $\mathrm{OPT}(G)$ is the optimal OBST. Specifically, we first create a source node $s$ and $m^p$ clones of $G_{H}$, namely $G_{H,1},G_{H,2},...,G_{H,m^p}$, and add them into $G$, where $p$ denotes the smallest integer greater than $\frac{1-\epsilon+\log_{m}2}{\epsilon}$. Each clone node of $y$ is then connected to $s$ in $G$. The cost of every edge in $G$ is $1$, and $\alpha$ and $\beta$ are set to be any number no smaller than $\frac{m^p(m^{p+1}+1)}{2}$ and $0$, respectively. 

Every node (except $s$) in $G$ joins and leaves the multicast group as follows. 
We first generate an arbitrary permutation $\mathcal{R}$ for all nodes in $G_{H}$. Let $\mathcal{R}[j]$ denote the $j$-th node in $\mathcal{R}$, where $1\leq j \leq m$.
Then, in each $t_i$ with $1\leq i \leq m^{p+1}$, the clone node in $G_{H,\lceil i/m \rceil}$ of $\mathcal{R}[((i-1)\mod m)+1]$ will join the multicast group.
In $t_{m^{p+1}}$, all nodes are in the multicast group, and each node then leaves the group one-by-one in 
each subsequent time slot after $t_{m^{p+1}+1}$
in a reverse order (i.e., first-in-last-out). The total number of time slots is $2m^{p+1}$.


We first prove the first goal (i.e., the sufficient condition). If $G_{H}$ has a Hamiltonian path starting at $y$, there is a path starting from $s$ that visits every destination in $D_m=G_{H,1}$ exactly once in $t_m$, and the source must be in the tree such that $b_{T_i}\geq 1$. That is, the cost of the above path is at most $m+\alpha$. Since $\beta=0$, the smallest cost in $t_m$ is at most $m+\alpha$. Similarly, the smallest costs in $t_{2m}$-th, $t_{3m}$-th, ..., $t_{m^{p+1}}$-th are at most $2m + \alpha$, $3m+\alpha$, ..., $m^{p+1}+\alpha$, respectively. Moreover, because the group size grows until $t_{m^{p+1}}$, the smallest cost in $t_i$ must be no greater than that of $t_{(\lceil i/m\rceil \times m)}$, where $1\leq i\leq m^{p+1}$. Thus, let $\mathrm{OPT}(G_i)$ denote the smallest cost of $t_i$,
\begin{align*}
\mathrm{OPT}(G) &= \sum_{i=1}^{2m^{p+1}}\mathrm{OPT}(G_{i}) = 2\sum_{i=1}^{m^{p+1}}\mathrm{OPT}(G_{i}) \\
                &\leq 2\big[(1+2+...+m^p)m^2+m^{p+1}\alpha\big]\\
                &= m^{p+1}(m^{p+1}+1) +2\alpha m^{p+1}\leq 2\alpha m(m^{p}+1).
\end{align*}

We then prove the second goal (i.e., the necessary condition). If $G_{H}$ does not have a Hamiltonian path starting at $y$, for each time $t_i$ with $1\leq i < m$, the cost is at least $i+\alpha$. In $t_m$, there exists at least one branch node, and $\mathrm{OPT}(G_{m}) \geq m+2\alpha$. Similarly, the smallest cost in $t_i$ must be at least $i+(\lfloor i/m \rfloor+1) \times \alpha$, where $1\leq i\leq m^{p+1}$. Therefore,
\begin{align*}
&\mathrm{OPT}(G)  = 2\sum_{i=1}^{m^{p+1}}(\mathrm{OPT}(G_{i}))\\
                &\geq 2\sum_{i=1}^{m^{p+1}}(i+(\lfloor i/m \rfloor+1) \times \alpha)\\
            & \geq  2m^{p}(\frac{m(m+1)}{2}+\alpha m)+m^{p+1}(m^{p}-1)(m+\alpha)\\
            & > \alpha m^{p+1}(m^{p}+1)=2\alpha m(m^{p}+1)m^{p-\log _{m}2} \\
            & = 2\alpha m( m^{p}+1) m^{(p-p\epsilon)+(p\epsilon -\log_{m}2)} \\
            & \geq  2\alpha m(m^{p}+1) m^{(p+1)(1-\epsilon)}=2\alpha m(m^{p}+1)|D_{max}|^{1-\epsilon}.
\end{align*}
Since $\epsilon $ can be arbitrarily small, for any $\epsilon >0$, there is no $%
|D_{max}|^{1-\epsilon }$-approximation algorithm unless P$=$NP.
\end{proof}

Since any $c$-competitive algorithm for an online problem is also a $c$-approximation algorithm for the offline problem \cite{OnlineCompetitiveAnalsis}, we have the following corollary.
\begin{coro}
For any $\epsilon>0$, no $|D_{max}|^{1-\epsilon}$-competitive algorithm exists for the online OBST problem unless P = NP.
\end{coro}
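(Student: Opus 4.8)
The plan is to obtain the online lower bound directly from Theorem~\ref{theo: hardness} via the standard observation that an online algorithm, treated as a black box, is in particular an algorithm for the corresponding offline problem: the collection $T=\{T_1,\dots,T_n\}$ it produces after processing the whole request sequence is feasible for offline OBST, and the benchmark in its competitive guarantee is exactly the offline optimum $\mathrm{OPT}(G)$. Hence a polynomial-time $|D_{max}|^{1-\epsilon}$-competitive algorithm would in particular be a polynomial-time $|D_{max}|^{1-\epsilon}$-approximation for offline OBST, which Theorem~\ref{theo: hardness} forbids unless P $=$ NP. So I would argue by contradiction.

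Concretely, first I would assume a polynomial-time online algorithm $A$ that is $|D_{max}|^{1-\epsilon}$-competitive for some fixed $\epsilon>0$. Given an HP instance $G_H$, I would build the OBST instance $G$ together with its time-slotted join/leave schedule exactly as in the proof of Theorem~\ref{theo: hardness}, and feed this request sequence to $A$ slot by slot; since the construction is polynomial in $|G_H|$ and $A$ runs in polynomial time, the whole computation is polynomial. Then I would compare $A$'s total cost against the two cases of the reduction: in the ``yes'' case $\mathrm{OPT}(G)\le 2\alpha m(m^{p}+1)$, so $A$'s cost is at most $2\alpha m(m^{p}+1)\,|D_{max}|^{1-\epsilon}$; in the ``no'' case every feasible solution, $A$'s included, costs at least $\mathrm{OPT}(G)>2\alpha m(m^{p}+1)\,|D_{max}|^{1-\epsilon}$. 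Thus the polynomially computable cost of $A$'s output decides whether $G_H$ has a Hamiltonian path starting at $y$, yielding a polynomial-time algorithm for HP and hence P $=$ NP, contradicting the hypothesis.

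The only point that needs care — and the main obstacle — is the additive slack usually permitted in the definition of competitive ratio, i.e. $A$ is only guaranteed $A(\sigma)\le |D_{max}|^{1-\epsilon}\,\mathrm{OPT}(\sigma)+\gamma$ for some constant $\gamma$ independent of $\sigma$. I would dispose of this by exploiting the freedom already present in the construction of Theorem~\ref{theo: hardness}: $\alpha$ may be taken arbitrarily large (it is only required to be at least $\tfrac{m^{p}(m^{p+1}+1)}{2}$), so $\mathrm{OPT}(G)$ can be made to dominate any fixed $\gamma$ while the multiplicative separation $|D_{max}|^{1-\epsilon}$ between the ``yes'' and ``no'' values is preserved; alternatively, replacing $\epsilon$ by a slightly smaller $\epsilon'$ absorbs the additive term into the gap. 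After this adjustment the case analysis above still strictly separates the two instances, and the corollary follows. This is simply the generic ``any $c$-competitive algorithm is a $c$-approximation'' principle of \cite{OnlineCompetitiveAnalsis} specialized to OBST.
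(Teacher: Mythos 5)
Your argument is correct and is essentially the paper's own proof: the paper disposes of the corollary in one line by invoking the generic fact that any $c$-competitive algorithm is a $c$-approximation for the offline problem and then applying Theorem~\ref{theo: hardness}. Your additional care about the additive slack in the competitive-ratio definition (absorbed by scaling $\alpha$ or shrinking $\epsilon$) is a legitimate refinement the paper glosses over, but it does not change the route.
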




\section{Algorithm Design of OBSTA} \label{sec: algorithm}
In the following, we propose a $|D_{max}|$-competitive algorithm, called Online Branch-aware Steiner Tree Algorithm (OBSTA), for OBST. Different from SPT and ST that focus on a static tree, we introduce the notion of \emph{budget} $B_{i}$ and \emph{deposit} $dep_{i}$ for each $t_i$ to properly optimize a sequence of correlated trees over time. Intuitively, with a larger budget and deposit, OBSTA is able to reroute the tree more aggressively for constructing a more bandwidth efficient tree in $t_{i+1}$. Nevertheless, the rerouting cost also grows in this case, and it is thus necessary to carefully set a proper budget in order to achieve the performance bound (i.e., competitive ratio). Also, the remaining budget $B_{i}$ can be saved to deposit $dep_{i}$, so that $B_{i+1}+dep_{i}$ can be exploited to reroute the tree in $t_{i+1}$. More specifically, to achieve the competitive ratio, the lower bound of the optimal solution is exploited to derive the budget. The lower bound is the longest shortest path among all source-destination pairs plus the cost of one branch node (proved later). Thus, we set the budget in $t_i$ as $B_{i} = \sum_{d\in D_{i}}dist_{G}(s,d)+\alpha\times |D_{i}|$. On the other hand, the deposit represents the remaining unused budget, i.e., $dep_{i}=dep_{i-1}+B_{i}-w(T_{i})-\alpha\times b_{T_i}-\beta \times rc_{i}$.

Also, we derive a \emph{potential rerouting cost} $prc_{i}$ at each $t_i$ for OBSTA to estimate the worst-case rerouting effort in the future. In the worst case, when all the other destinations in the same subtree of $d$ leave the group, $d$ will connect to $s$ via the shortest path because another alternate zigzag incurs a high cost. Therefore, $prc_{i}=\sum_{e\in \cup_{d\in D_i} (P_{T_{i}}(s,d)\triangle P_{G}(s,d))} w(e)$,
where $P_{A}(s,d)$ denotes the shortest path between $s$ and $d$ in the graph $A$.
OBSTA includes a policy to save a sufficient deposit $dep_{i}\geq \beta \times prc_{i}$ for the future. Moreover, to reduce the rerouting cost, our idea is to organize stable destinations (i.e., the destinations staying with longer time\footnote{Stable users can be extracted by \cite{StablePeer
} according to user statistics, such as the duration and join and leave frequencies.}) in a bandwidth-efficient stable tree, called \emph{Reference Tree (RT)}, and other destinations are then rerouted and attached to RT in OBSTA. Later we show that the budget, deposition, potential rerouting cost, and RT are the cornerstones of OBSTA to achieve the tightest performance bound. Section \ref{sec: evaluation} also manifests that OBSTA outperforms SPT and ST because temporal information is not defined and exploited in SPT and ST.

\subsection{Algorithm Description}

For each time $t_i$, OBSTA includes the following four phases:
1) Reference Tree (RT) Construction, 2) Candidate Deployed Tree (DT) Generation, 3) Candidate DT Patching, and 4) Final DT Selection. More specifically, RT Construction first builds a bandwidth-efficient tree for stable destinations as a reference, whereas DT is the real multicast tree deployed in SDNs. Candidate DT Generate creates several candidate DTs, and based on RT, Candidate DT Patch reroutes and attaches different remaining destinations to create several candidate final solutions, in order to effectively avoid unnecessary zigzag paths in the tree. The final phase then chooses the one with the minimum cost. To achieve the performance bound, it is important for Candidate DT Patching to ensure that the current deposit is sufficient to support rerouting in the future. 
The pseudo-code is presented in Appendix \ref{sec: pseudo-code}.

\subsubsection{Reference Tree (RT) Construction}

At the beginning of each $t_i$, RT Construction identifies a set of stable destinations  \cite{StablePeer
} to construct a reference tree $RT_i$.
Let $\tau_{i}$ denote the number of stable destinations in $RT_i$. In $t_1$ with no destination, only $s$ in included in $RT_1$ with $\tau_{1}=0$. 
Afterward, for each $t_i$ with $i>1$, to achieve the performance bound, OBSTA selects $\tau_{i}$ stable destinations from \cite{StablePeer
} and then builds $RT_{i}$ rooted at $s$ with a subtree of $SPT(D)$ that spans only the $\tau_{i}$ destinations, where $SPT(D)$ is the shortest path tree spanning all candidate destinations in $D$. More specifically, to evaluate the degree of stability for each destination, the \emph{Stability Index} (SI) \cite{StablePeer} of a node $d\in D$ is defined as follows,
\begin{equation}\notag
SI(d)=
\begin{cases}
\begin{array}{ll}
\frac{2h_d}{n-a_d} & \text{if } h_d\leq \frac{n-a_d}{2},\\
1 & \text{otherwise},\\
\end{array}
\end{cases}
\end{equation}
where $n$, $a_d$, and $h_d$ denote the number of time slots for the multicast, the first arrival time slot of $d$, and the duration in the group of $d$, respectively.
A destination $d$ with an $SI(d)>H$ will be selected as the stable destination, where $H$ is a threshold tuned by the network operators.\footnote{By simulations and experiments, \cite{StablePeer} suggests $H$ should be set to $0.2$, which is sufficient to filter out most transient nodes, and thus well predicts the future stable nodes.} Thus, $\tau_i=|\{d| d\in D \wedge SI(d)> H\}|$.


\subsubsection{Candidate Deployed Tree (DT) Generation}

In contrast to RT $RT_i$, DT $T_i$ is the solution tree to be deployed in SDNs at $t_i$. Therefore, for the destinations leaving in the beginning of $t_i$, OBSTA first removes them from $T_{i-1}$ by pruning the edges not on a path between the source and any staying destination, in order to create an initial DT $T_{i,leave}$ ($=T_{i-1}\ominus (D_{i-1}\setminus D_i)$) in $t_i$.
Afterward, OBSTA sorts the destinations in $T_{i,leave}$ in a non-decreasing order of the distance between $s$ and each destination to find the sorted list $(q_{1},q_{2},q_{3},...,q_{k})$, where $k=|D_{i-1}\cap D_{i}|$.
Finally, the candidate DT set is $\{T_{i}^{1}, ..., T_{i}^{k}\}$, where each $T_{i}^{l}$ includes the $l$ closet destinations from the source, (i.e., $(q_{1},...,q_{l})$). For each candidate DT, other destinations not in the tree will be rerouted
and attached to the tree in the next phase as much as possible.
In other words, OBSTA is designed to explore different intermediate trees with varied rerouting strategies to achieve the performance bound.

More specifically, OBSTA derives the routing of each candidate DT by \emph{sprouting}, i.e., $T_{i}^{k}=T_{i,leave}\otimes(q_{1},q_{2},...,q_{k})$ and $T_{i}^{0}={s}$, defined as follows.

\begin{defi}
For any tree $A=\{V_A,E_A\}$ with $V_A$ and $E_A$ as the vertex and edge sets, respectively, let $P_{A}(s,t)$ denote the path from $s$ to $t$ in $A$.
Given an arbitrary sequence $(a_1,a_2,...,a_k)$ of nodes in any subset of $V_A$, where $k\leq |V_A|$, \emph{sprouting} $A\otimes(a_1,a_2,...,a_k)$ outputs a collection of subtrees of $A$, denoted by $(A_{1},..., A_k)$, where each subtree $A_{k}=\bigcup_{l\leq k}P_{A}(s,a_{l})$, and $A_{1}\subseteq A_{2}\subseteq ... \subseteq A_{k}\subseteq A$.
\end{defi}
\begin{figure}[t]
\centering
\includegraphics[scale=0.35] {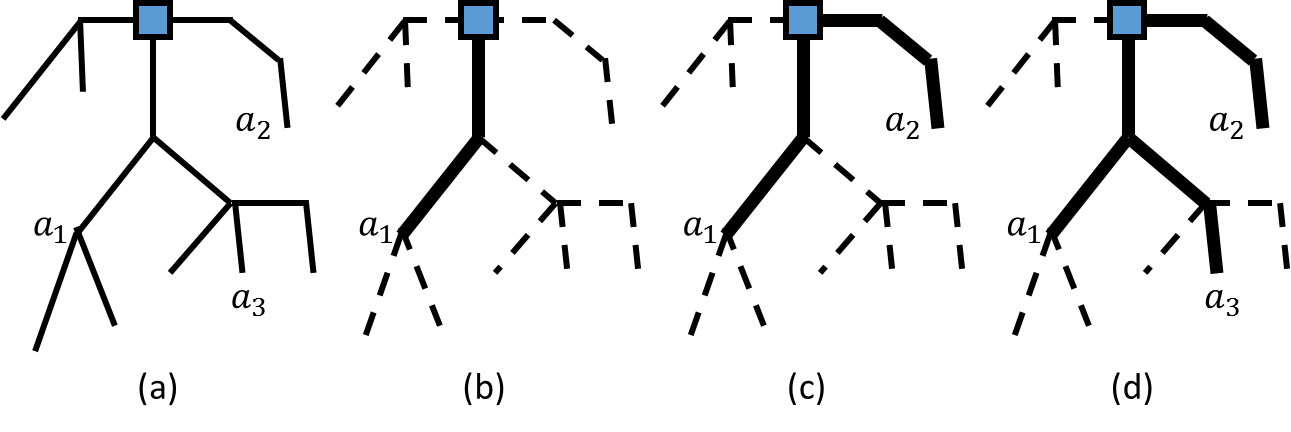}
\vspace{-1.5 mm}
\caption{Example of sprouting $A\otimes (a_{1},a_{2},a_{3})$.}
\label{AotimesB}
\end{figure}
\begin{Example}
Fig. \ref{AotimesB} presents an example of sprouting. Fig. \ref{AotimesB}(a) shows $A$ with a sequence of nodes $(a_{1},a_{2},a_{3})$. The bold trees in Figs. \ref{AotimesB}(b), \ref{AotimesB}(c) and \ref{AotimesB}(d) are subtrees $A_{1}$, $A_{2}$ and $A_{3}$, respectively, where $A_{1}=P_{A}(s,a_{1})$, $A_{2}=P_{A}(s,a_{1})\cup P_{A}(s,a_{2})$ and $A_{3}=P_{A}(s,a_{1})\cup P_{A}(s,a_{2})\cup P_{A}(s,a_{3})$. Fig. \ref{NodeLeave} presents an example of Candidate DT Generation, where Fig. \ref{NodeLeave}(a) shows $T_{i-1}$. After removing the leaving destinations (red nodes), Fig. \ref{NodeLeave}(b) shows $T_{i,leave}$, and the six candidate DTs (i.e., $T_i^0,T_i^1,T_i^2,T_i^3,T_i^4,T_i^5$) are generated, where $T_{i}^{3}$ is shown in Fig. \ref{NodeLeave}(c). 
\end{Example}

\begin{figure}[tb]
\centering
\includegraphics[scale=0.35] {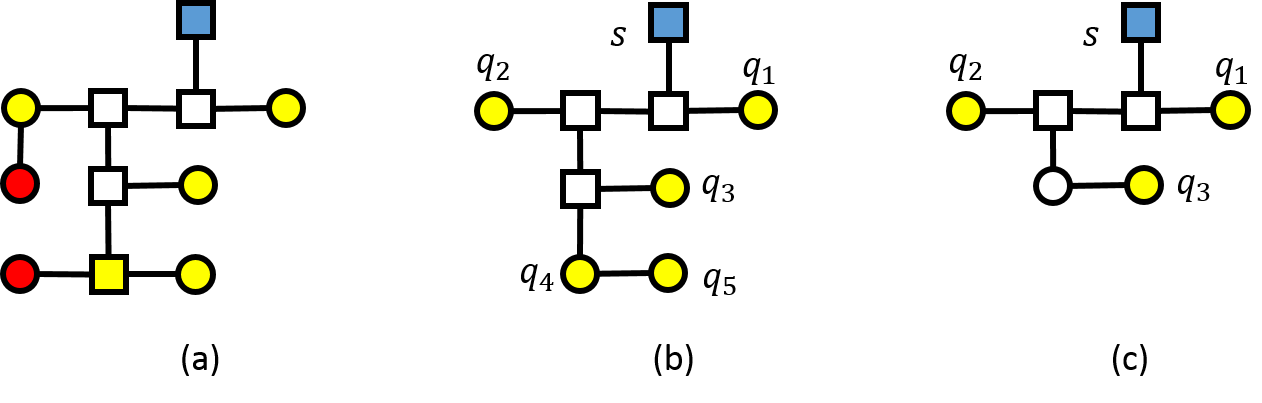}
\vspace{-1.5 mm}
\caption{Example of Candidate DT Generation. (a) $T_{i-1}$. (b) $T_{i,leave}$. (c) $T_{i}^{3}$.}
\label{NodeLeave}
\end{figure}

\subsubsection{Candidate Deployed Tree (DT) Patching}
To generate candidate final solutions, this phase adds new joining destinations and reroutes the destinations (not selected in the previous phase) to the candidate DTs according to RT, whereas the sufficient deposit is required to be followed.
For ease of presentation, we first define the operation, \emph{grafting}, as follows.
\begin{defi} \label{defi: graft}
Given two trees $A$ and $B$ rooted at $s$, \emph{grafting} $A\oplus B$ outputs a tree spanning all destinations in $A$ and $B$ as follows.
1) For each destination $d\in B$ and each shortest path $P_{B}(u,d)$ in $B$ with $u\in A\cap B$, find the smallest one $P_{B}(u_d,d)$ such that every node between $d$ and $u_d$ in $P_{B}(u_d,d)$ is not in $A$.
2) $A\oplus B=A\cup\bigcup_{d\in D_B}P_{B}(u_d,d)$, where $D_B$ denotes the destination set in $B$.

\end{defi}
\begin{figure}[t]
\centering
\includegraphics[scale=0.35] {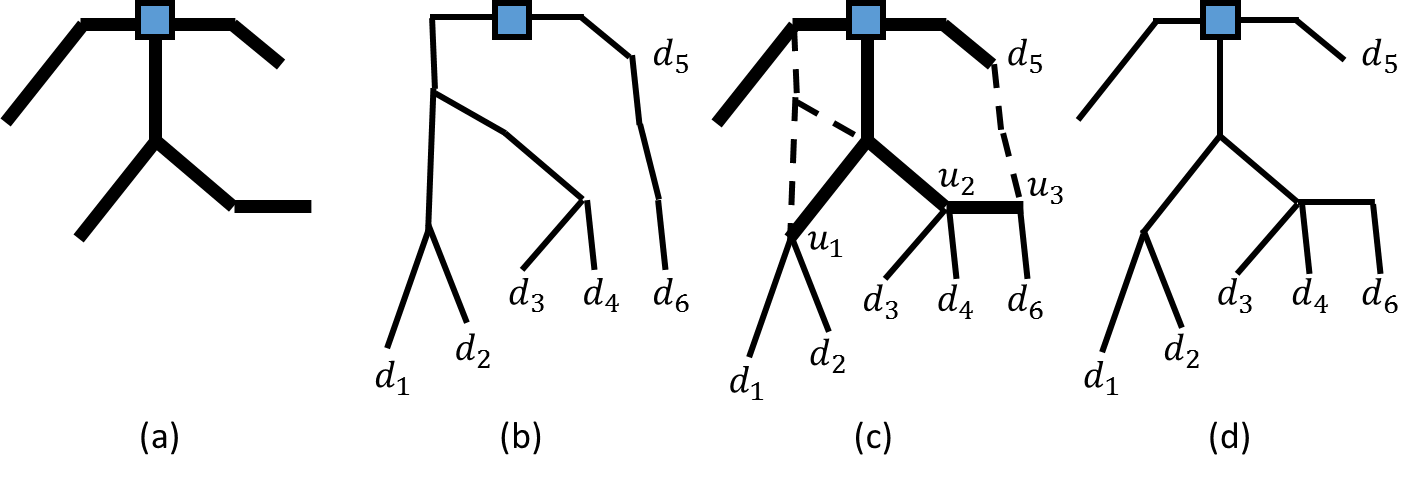}
\vspace{-1.5 mm}
\caption{Example of grafting $A\oplus B$.}
\label{AoplusB}
\end{figure}
\begin{Example}
Fig. \ref{AoplusB} presents an example of grafting. Fig. \ref{AoplusB}(a) shows tree $A$, and Fig. \ref{AoplusB}(b) is $B$ with six destinations. We first find every path $P_{B}(u_d,d)$ for each destination $d$ in $B$ as shown in Fig. \ref{AoplusB}(c). Nodes $d_{1}$ and $d_{2}$ both connect to $u_1$ in $A$, and nodes $d_{3}$ and $d_{4}$ both connect to $u_2$ in $A$. Since node $d_5$ is in $A\cap B$, the path contains only $d_5$. Node $d_{6}$ connects to node $u_3$. Fig. \ref{AoplusB}(d) shows $A\oplus B$, and $A\subseteq A\oplus B$.
\end{Example}
Then, let $D_{i,join}$ denote the set of destinations joining in the beginning of $t_i$. For each candidate DT $T_{i}^{l}$, let $D_{i,reroute}^{l}$ denote the set of staying destinations that are not included in $T_{i}^{l}$ in the previous phase. 
Let $D_{i,attach}^{l}$ denote the destination set containing $D_{i,join}$ and $D_{i,reroute}^{l}$.
Let $D_{i,stable}^{l}$ represent the stable nodes in $D_{i,attach}^{l}$. In the following, we first attach the destinations in $D_{i,stable}^{l}$ to candidate DT $T_{i}^{l}$. Let $RT_{i}(D_{i,stable}^{l})$ denote the minimum-cost subtree of $RT_{i}$ to span $D_{i,stable}^{l}$.
OBSTA first sorts the nodes in $D_{i,stable}^{l}$ in a non-decreasing order of SI (see RT Construction phase)
as $(q_1,q_2,...,q_k)$, where $k=|D_{i,stable}^{l}|$. Then, to patch each candidate DT $T_{i}^{l}$, OBSTA iteratively reroutes and attaches the closest stable destinations according RT 
to generate 
$(T_{i,stable}^{l_{1}},...,T_{i,stable}^{l_{k}})=(RT_{i}(D_{i,stable}^{l})\otimes (q_1,q_2,...,q_k))\oplus T_{i}^{l}$. 
In other words, OBSTA attaches (i.e., grafts) the destinations $(q_1,q_2,...,q_k)$ to each candidate tree $T_{i}^{l}$ according to sprouting of reference tree $RT_i$ and $D_{i,stable}^{l}$. Afterwards, for those $T_{i,stable}^{l_m}$ having sufficient deposit, OBSTA selects the one with the maximum $m$ and set 
the patched candidate DT $T_{i,stable}^{l}$ as $T_{i,stable}^{l_m}$. Note that $T_{i,stable}^{l}=T_{i}^{l}$ if $m=0$. 





\begin{figure}[tb]
\centering
\includegraphics[scale=0.35] {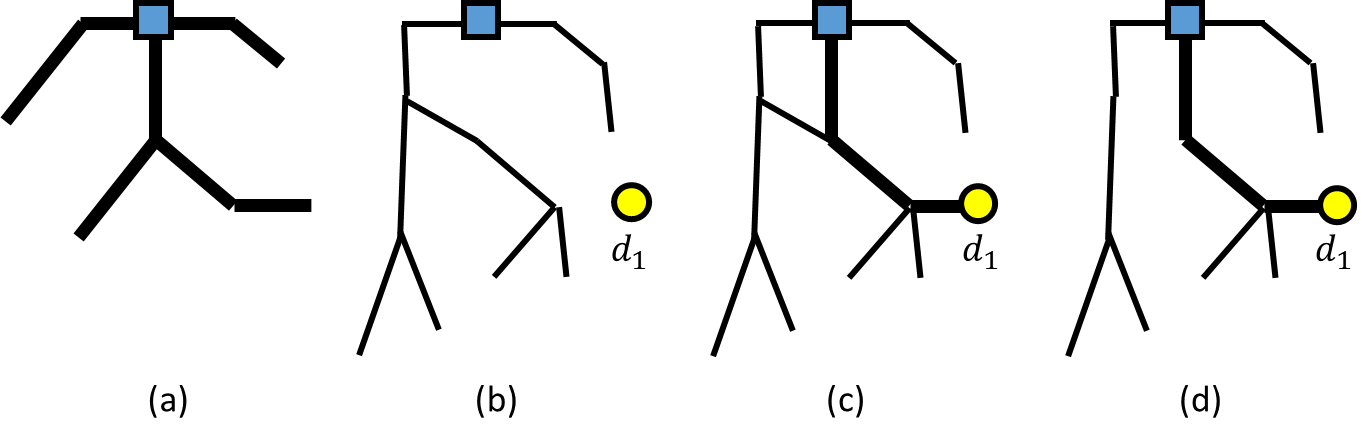}
\vspace{-1.5 mm}
\caption{Example of Candidate DT Patching. (a) the reference tree $RT_{i}$, (b) the current tree and a stable destination node $d_{1}$ that required to be added, (c) attempt to connect $d_{1}$ to the source with the path on reference tree $RT_{i}$, (d) the tree after adding $d_{1}$.}
\label{NodeJoinRef}
\end{figure}

\begin{Example}
Fig. \ref{NodeJoinRef} presents an example of the Candidate DT Patching for stable destination $d_1$, where Figs. \ref{NodeJoinRef}(a) and \ref{NodeJoinRef}(b) are RT and the candidate DT, respectively. OBSTA reroutes $d_1$ to the DT with the path in RT shown in Fig. \ref{NodeJoinRef}(d).
\end{Example}


Afterward, we attach the rest $D_{i,rest}^{l}$ of the destinations (i.e., $D_i-D_{i,stable}^{l}$) to candidate DT $T_{i,stable}^{l}$ with operation \emph{contraction} defined as follows.
\begin{defi}
Given a graph $A=\{V_A,E_A\}$ with a connected subgraph $B=\{V_B,E_B\}$, \emph{contraction} $A\cdot B$ outputs a graph $A'$ by contracting (i.e., merging) the nodes in $B$ into one vertex $u$. For each neighbor $v\in V_A\setminus V_B$ of $u$, $A\cdot B$ includes only the edge with $\min_{e\in E_{u,v}}\{w(e)\}$, where $E_{u,v}$ denotes the set of edges between $u$ and $v$.
\end{defi}
\begin{Example}
Fig. \ref{NodeFinalJoin} presents an example of contraction.
Fig. \ref{NodeFinalJoin}(a) shows the $A$ with a connected subgraph $B$ (the bold-line tree). Fig. \ref{NodeFinalJoin}(b) shows the contracted graph $A'$ with all vertices in $B$ merged to $s'$.
\end{Example}

OBSTA first builds a complete graph $G'$ consisting of the nodes in $D_{i,rest}^{l}$ and a new source $s'$, by contracting the nodes of $T_{i,stable}^{l}$ in $G$ to form a new source $s'$ and assigning each edge weight $e'_{u,v}$ in $G'$ as the shortest path length between nodes $u$ and $v$ in the contracted $G$.
Then, OBSTA constructs a minimum spanning tree $MST(G')$ rooted at $s'$ to span all destinations in $D_{i,rest}^{l}$ in $G'$, and then reverts $G'$ to $G$ to find the complete tree $T_{i,rest-MST}^{l}$. If $T_{i,rest-MST}^{l}$ has the sufficient deposit, OBSTA patches the candidate DT $T_i^l$ as $T_{i,rest-MST}^{l}$.
Otherwise, OBSTA constructs another tree $SPT(D_{i,rest}^{l})$ rooted at $s$ and extracted from $SPT(D)$ to span all destinations in $D_{i,rest}^{l}$.
Then, OBSTA grafts $SPT(D_{i,rest}^{l})$ on $T_{i,stable}^{l}$ (i.e., $T_{i,stable}^{l} \oplus SPT(D_{i,rest}^{l})$) to find the complete tree $T_{i,rest-SPT}^{l}$.
In this case, if $T_{i,rest-SPT}^{l}$ has sufficient deposit, OBSTA patches the candidate DT $T_i^l$ as $T_{i,rest-SPT}^{l}$. Otherwise, OBSTA discards the candidate DT $T_i^l$.

\begin{figure}[tb]
\centering
\includegraphics[scale=0.35] {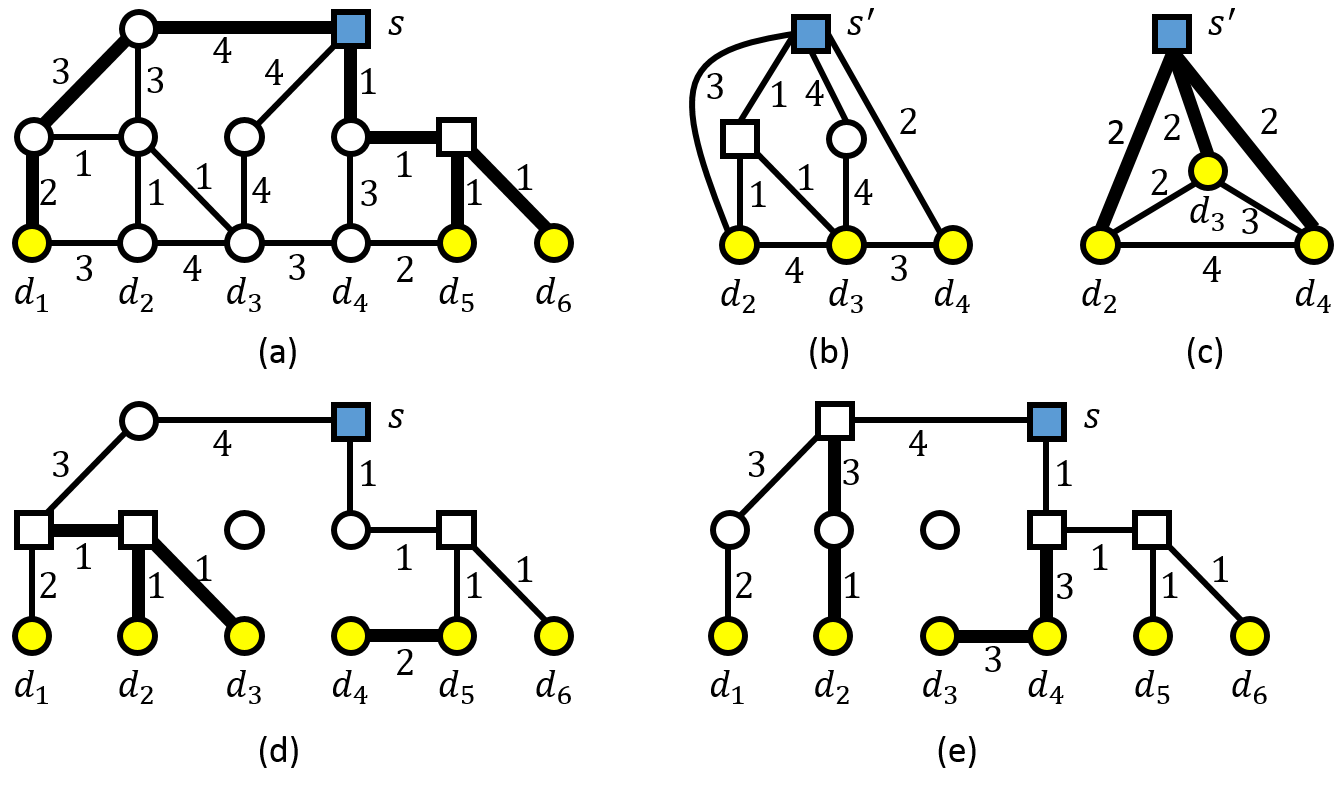}
\vspace{-1.5 mm}
\caption{Example of Candidate DT Patching. (a) Graph $G$ and candidate DT $T_{i,stable}^{l}$ (with bold lines). (b) Graph after contracting $T_{i,stable}^{l}$. (c) Graph $G'$ and Tree $MST(G')$ (with bold lines). (c) Tree $T_{i,rest-MST}^{l}$ obtained by recovering $MST(G')$. (d) Tree $T_{i,rest-SPT}^{l}$ obtained by $T_{i,stable}^{l} \oplus SPT(D_{i,rest}^{l})$.}
\label{NodeFinalJoin}
\end{figure}

\begin{Example}
Fig. \ref{NodeFinalJoin} presents an example of Candidate DT Patching for $d_2,d_3,d_4$ that have not been included in $T_{i}^l$ generated in Candidate DT Generation. Fig. \ref{NodeFinalJoin}(a) shows $G$ and the current candidate DT $T_{i,stable}^l$ including $d_1,d_5,d_6$. OBSTA first contracts $T_{i,stable}^l$ to $s'$ in Fig. \ref{NodeFinalJoin}(b) and constructs the complete graph $G'$ in Fig. \ref{NodeFinalJoin}(c). The constructed spanning tree $d_2,d_3,d_4$ (with bold lines) in $G'$ is in Fig. \ref{NodeFinalJoin}(c). Then, OBSTA reverts $G'$ to $G$ to obtain $T_{i,stable-MST}^l$ in Fig. \ref{NodeFinalJoin}(d). If $T_{i,stable-MST}^l$ does not have sufficient deposit, OBSTA constructs another tree $T_{i,rest-SPT}^l$ by $T_{i,stable}^{l} \oplus SPT(D_{i,rest}^{l})$ in Fig. \ref{NodeFinalJoin}(e).
\end{Example}
\subsubsection{Final Deployed Tree (DT) Selection}
This phase selects the final DT to 1) minimize the weighted sum of the tree cost, branch cost, and rerouting cost, 2) maximizes the deposit in $t_i$, and 3) effectively avoid the dramatic rerouting cost in the future. In other words, DT with the largest $g(T_i^{l})=\gamma \times dep_i^l
+ (1-\gamma) \times \beta\times prc_i^{l}$ is extracted as the solution tree,
where $prc_i^{l}$ is the potential rerouting cost of $T_{i}^{l}$, and $\gamma$ is a tunable parameter for the above factors. Intuitively, if the join and leave frequencies are not small, setting a small $\gamma$ can effectively reduce the potential rerouting cost.

\subsection{Analysis}
In the following, we first prove that at least one candidate DT has the sufficient deposit at each $t_i$ (i.e., $dep_i\geq \beta \times prc_i$). 
Subsequently, we show that $\sum_{i}{B_{i}}$ is no smaller than the solution $OBSTA(G)$ of OBSTA, and $\sum_{i}{B_{i}}$ is smaller than $|D_{max}|$ times of the optimal cost. 
Finally, we prove that OBSTA is a $|D_{max}|$-competitive algorithm.

\begin{lemma} \label{lemma: deposit constraint}
There is at least one candidate DT 
with the sufficient deposit
in each $t_i$.
\end{lemma}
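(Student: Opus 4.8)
The plan is to prove this by induction on the time index $i$, carrying the slightly stronger invariant that the DT actually deployed at $t_{i-1}$ satisfies $dep_{i-1}\ge \beta\times prc_{i-1}\ge 0$ (this is exactly ``sufficient deposit'' for the chosen tree, and it holds because OBSTA only deploys a candidate that survived the patching phase). At time $t_i$ I would exhibit one surviving candidate explicitly, namely $T_i^{0}=\{s\}$. Since $T_i^{0}$ inherits no edges from $T_{i-1}$, its Candidate DT Patching phase is free to attach no stable destination through $RT_i$ and to route every current destination along its shortest path; consequently one admissible patched form of $T_i^{0}$ is the shortest-path tree $SPT(D_i)$, i.e.\ the minimal subtree of $SPT(D)$ that spans $\{s\}\cup D_i$. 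Because OBSTA discards a candidate only when \emph{neither} its rest-MST patch \emph{nor} its rest-SPT patch reaches the required deposit, it is enough to show that this particular $SPT(D_i)$ already has sufficient deposit.

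To that end I would establish three facts about $\widehat T_i:=SPT(D_i)$ seen as the patched version of $T_i^{0}$. First, its cost fits inside the budget: $\widehat T_i$ is contained in the union of the shortest $s$--$d$ paths over $d\in D_i$, so $w(\widehat T_i)\le \sum_{d\in D_i}dist_G(s,d)$; and a tree whose leaf set is contained in $D_i\cup\{s\}$ has at most $|D_i|$ branch nodes (counting $s$, using the tree identity $n_{\ge 3}\le n_{\mathrm{leaf}}-2$), so $\alpha\times b_{\widehat T_i}\le \alpha|D_i|$; adding these, $w(\widehat T_i)+\alpha\times b_{\widehat T_i}\le B_i$. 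Second, its potential rerouting cost vanishes: in a shortest-path tree $P_{\widehat T_i}(s,d)=P_G(s,d)$ for every $d$, so every symmetric difference $P_{\widehat T_i}(s,d)\triangle P_G(s,d)$ is empty and $prc_i^{0}=0$. Third, its rerouting cost is paid for by the carried-over deposit: I claim $rc_i^{0}\le prc_{i-1}$. The rerouting cost $rc_i^{0}$ is the weight of $(T_{i-1}\ominus(D_{i-1}\setminus D_i))\triangle(\widehat T_i\ominus(D_i\setminus D_{i-1}))$; an edge in this symmetric difference lies, in exactly one of $T_{i-1}$ and $\widehat T_i$, on the $s$--$d$ path for some $d\in D_{i-1}\cap D_i$, and since $P_{\widehat T_i}(s,d)=P_G(s,d)$ every such edge belongs to $P_{T_{i-1}}(s,d)\triangle P_G(s,d)$; as $D_{i-1}\cap D_i\subseteq D_{i-1}$ and weights are non-negative, the total weight is at most $prc_{i-1}=w\bigl(\bigcup_{d\in D_{i-1}}(P_{T_{i-1}}(s,d)\triangle P_G(s,d))\bigr)$.

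Combining the three facts with the inductive invariant $dep_{i-1}\ge\beta\times prc_{i-1}$ gives
\[
dep_i^{0}=dep_{i-1}+B_i-w(\widehat T_i)-\alpha\times b_{\widehat T_i}-\beta\times rc_i^{0}\ge\bigl(\beta\, prc_{i-1}-\beta\, rc_i^{0}\bigr)+\bigl(B_i-w(\widehat T_i)-\alpha\, b_{\widehat T_i}\bigr)\ge 0=\beta\times prc_i^{0},
\]
so $T_i^{0}$ meets the sufficient-deposit condition and is retained, which proves the lemma at $t_i$; moreover, since Final DT Selection picks among the retained candidates and every retained candidate has sufficient deposit, the deployed $T_i$ satisfies $dep_i\ge\beta\times prc_i\ge 0$, re-establishing the invariant. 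The base case $t_1$ (and any slot with $D_i=\emptyset$) is routine, with $dep_0=0$ and $rc_1=0$ since no destination is carried over.

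I expect the main obstacle to be the third fact, $rc_i^{0}\le prc_{i-1}$: the edge-charging must be carried out with care around the \emph{pruning} operation, verifying that pruning only deletes edges and that each surviving edge of $T_{i-1}\ominus(D_{i-1}\setminus D_i)$ still lies on an $s$-to-(staying-destination) path, so that no edge of the symmetric difference escapes the union $\bigcup_{d\in D_{i-1}}(P_{T_{i-1}}(s,d)\triangle P_G(s,d))$. A secondary point is to confirm that the Candidate DT Patching phase of $T_i^{0}$ genuinely admits $SPT(D_i)$ as an outcome regardless of how the stable/rest split of $D_i$ turns out, which is where the ``attach no stable destination through $RT_i$'' degenerate choice is used.
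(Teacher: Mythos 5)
Your proof is correct and follows essentially the same route as the paper's: both hinge on exhibiting the source-only candidate $T_i^{0}$, showing that its $SPT(D)$-based patch fits within the budget $B_i$ (tree cost at most $\sum_{d\in D_i}dist_G(s,d)$, at most $|D_i|$ branch nodes, vanishing potential rerouting cost), and charging its rerouting cost against the deposit reserved at $t_{i-1}$. The only difference is presentational --- you argue directly, carrying the explicit invariant $dep_{i-1}\ge\beta\times prc_{i-1}$ and the per-edge charge $rc_i^{0}\le prc_{i-1}$, whereas the paper derives the same quantitative facts by contradiction through a per-destination case analysis.
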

\begin{proof}
The deposit in the beginning is $0$ and thus $dep_0\geq 0$. To prove by induction, we assume that the deposit at $t_{i-1}$ is non-negative and prove that the deposit at $t_i$ is non-negative (i.e., $dep_i\geq 0$), where $i\geq 1$.

Candidate DT Generation creates at least one candidate DT $T_{i}^{0}$ with only the source $s$ and deposit $dep_{i}^{0}=dep_{i-1}+B_{i}^{0}-w(T_{i}^{0})-\alpha\times b_{T_i}^{0}-\beta \times rc_{i}^{0}$. 
Note that $B_{i}^{0}=\alpha$, $w(T_{i}^{0})=0$, $b_{T_i}^{0}=1$, and $rc_{i}^{0}=0$. Thus, $dep_{i}^{0}=dep_{i-1}$. Moreover, because only $s$ is in $T_{i}^{0}$, there is no destination, and $prc_{i}^{0}=0$. Thus, $dep_{i}^{0}=dep_{i-1}\geq 0= \beta \times prc_{i}^{0}$.
Next, the first part of Candidate DT Patching can always generate a candidate DT ($T_{i,stable}^0$) 
with the sufficient deposit
because at least one candidate DT (e.g., $T_{i}^0$)
with the sufficient deposit
is produced by Candidate DT Generation.

Subsequently, we prove (by contradiction) that the candidate DT $T_{i}^{0}$ (i.e., $T_{i,rest-SPT}^{0}$) generated by the second part of Candidate DT Patching must
have the sufficient deposit.
Assume that candidate DT $T_{i}^{0}$ (i.e., $T_{i,stable}^0$) does not 
have the sufficient deposit.
That is,
there exists a set $D_{i,rest}^0$ such that $T_{i,stable}^{0} \oplus SPT(D_{i,rest}^{0})$
has the insufficient deposit.
Since $T_{i,stable}^0$ 
has the sufficient deposit,
we have two following inequalities representing the deposits before and after $T_{i,stable}^{0} \oplus SPT(D_{i,rest}^{0})$, respectively.
Note that we use hat symbol $\hat{}$ to mark the one after $T_{i,stable}^{0} \oplus SPT(D_{i,rest}^{0})$.
\begin{align}
dep_i^{0} &= dep_{i-1}+B_i^{0} - w(T_{i,stable}^{0}) - \alpha \times b_{T_i}^{0} - \beta  \times rc_i^{0}\notag\\
&\geq \beta \times prc_i^{0} \label{inequality: dep before}\\
\hat{dep}_i^{0} &= dep_{i-1}+\hat{B}_i^{0} - w(T_{i,rest-SPT}^{0}) - \alpha \times \hat{b}_{T_i}^{0} - \beta  \times \hat{rc}_i^{0}\notag\\
&<\beta \times \hat{prc}_i^{0} \label{inequality: dep after}
\end{align}
Since the budget increases by $\alpha \times|D_{i,rest}^0|+\sum_{x\in D_{i,rest}^0}dist_G(s,x)$, the tree cost grows by at most $\sum_{x\in D_{i,rest}^0}dist_G(s,x)$, and the branch cost increases by at most $|D_{i,rest}^0|$.
Thus, $\hat{prc}_i^{0} + \hat{rc}_i^{0} > prc_i^{0} + rc_i^{0}$.
We examine the generated cost for each node $d\in D_i$ as follows.
\begin{enumerate}
\item Node $d$ is in $T_{i,stable}^{0}$.
Both RT and $SPT(D_{i,rest}^0)$ are sub-trees rooted at $s$ in $SPT(D)$. Thus, $T_{i,stable}^{0} \oplus SPT(D_{i,rest}^{0})$ is also a sub-tree in $SPT(D)$. Since node $d$ is added according to RT, the potential rerouting costs for $d$ before and after $T_{i,stable}^{0} \oplus SPT(D_{i,rest}^{0})$ (i.e., $prc_i^{0}$ and $\hat{prc}_i^{0}$) are zero. The rerouting cost for $d$ remains the same after $T_{i,stable}^{0} \oplus SPT(D_{i,rest}^{0})$ because the routing path of $d$ does not change, leading to a contradiction. 
\item Node $d$ is in $D_{i,rest}^0$. After $T_{i,stable}^{0} \oplus SPT(D_{i,rest}^{0})$, $\hat{prc}_i^{0}$ becomes zero.
\begin{enumerate}
\item Node $d$ is in $D_{i-1}\cap D_{i}$. For $d$, $prc_i^{0}$ 
is no smaller than rerouting cost $\hat{rc}_i^{0}$
since $t_{i-1}$ finds $prc_i^{0}$ for $d$ from the symmetric difference between the path from $s$ to $d$ in $T_{i-1}$ and the shortest path from $s$ to $d$ in $G$, leading to a contradiction.
\item Node $d$ is not in $D_{i-1}$. For $d$, $rc_i^{0}=\hat{rc_i}^{0}=0$ in $t_i$. It leads to a contradiction since $prc_i^{0}$ is non-negative.
\end{enumerate}
\end{enumerate}
Thus, at least one candidate DT 
with the sufficient deposit.

Since all candidate DTs from Candidate DT Patching
have the sufficient deposit,
the one chosen by Final DT Selection also follows and generates a feasible tree.
\end{proof}
\begin{lemma} \label{lemma: budget sum no less OBSTA}
$\sum_{i}{B_{i}}$ is no smaller than $OBSTA(G)$
\end{lemma}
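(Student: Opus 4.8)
The plan is to read the inequality directly off the definition of the deposit together with Lemma~\ref{lemma: deposit constraint}. Write $n$ for the total number of time slots, and let $T_i$ be the deployed tree chosen by Final DT Selection in $t_i$. By the definition of the objective, $OBSTA(G)=\sum_{i=1}^{n}\big(w(T_i)+\alpha\, b_{T_i}+\beta\, rc_i\big)$.

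First I would unroll the deposit recursion. Since $dep_0=0$ and $dep_i=dep_{i-1}+B_i-w(T_i)-\alpha\, b_{T_i}-\beta\, rc_i$ for every $i\geq 1$, telescoping gives
\begin{equation}\notag
dep_n=\sum_{i=1}^{n}\big(B_i-w(T_i)-\alpha\, b_{T_i}-\beta\, rc_i\big)=\sum_{i=1}^{n}B_i-OBSTA(G).
\end{equation}
At this point one must check that the budget entering the update for the deployed tree is exactly $B_i=\sum_{d\in D_i}dist_G(s,d)+\alpha\,|D_i|$ and not a reduced value associated with a partial candidate (such as the budget $B_i^{0}=\alpha$ of $T_i^{0}$). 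This holds because Lemma~\ref{lemma: deposit constraint} guarantees that the candidate DT returned in $t_i$ is feasible, i.e., it spans all of $D_i$, so the budget charged against it is the full $B_i$.

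Finally I would invoke Lemma~\ref{lemma: deposit constraint} once more at the last slot: the selected DT in $t_n$ has sufficient deposit, hence $dep_n\geq \beta\, prc_n\geq 0$. Combining this with the telescoped identity yields $\sum_{i=1}^{n}B_i\geq OBSTA(G)$, which is the claim. I do not expect a genuinely hard step here — the argument is essentially a rewriting of Lemma~\ref{lemma: deposit constraint}; the only point needing care is the bookkeeping that the per-slot budget used in the deposit update is the one associated with the full destination set $D_i$, which is precisely where feasibility from Lemma~\ref{lemma: deposit constraint} is used.
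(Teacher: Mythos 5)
Your proposal is correct and follows essentially the same route as the paper: both unroll the deposit recursion into $\sum_i B_i - \sum_i\bigl(w(T_i)+\alpha\, b_{T_i}+\beta\, rc_i\bigr)$ and then invoke Lemma~\ref{lemma: deposit constraint} together with $prc_i\geq 0$ to conclude the deposit is non-negative, hence $\sum_i B_i\geq OBSTA(G)$. Your extra bookkeeping remark that the deployed tree is charged the full budget $B_i$ (not a partial candidate's budget such as $B_i^{0}=\alpha$) is a point the paper leaves implicit, but it does not change the argument.
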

\begin{proof}
By Lemma \ref{lemma: deposit constraint},
the sufficient deposit guarantees 
$dep_{i}=\sum_{i}B_{i}-\sum_{i}(w(T_{i})+\alpha\times b_{T_i}+\beta \times rc_{i})\geq \beta \times prc_i$. Because 
$prc_{i}$ is non-negative in each $t_i$,
$$
\sum_{i}B_{i} \geq \sum_{i}(w(T_{i})+\alpha\times b_{T_i}+\beta \times rc_{i}).
$$
Therefore, $\sum_{i}B_{i} \geq OBSTA(G)$.
\end{proof}
\begin{theo} \label{theo: ratio}
OBSTA is a $|D_{max}|$-competitive algorithm.
\end{theo}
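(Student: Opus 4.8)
The plan is to deduce the competitive ratio by combining the two lemmas already in hand with a single new ingredient: a per-time-slot lower bound on the optimal offline cost. Lemma~\ref{lemma: budget sum no less OBSTA} already gives $OBSTA(G)\le\sum_i B_i$, where the hypothesis that OBSTA is never stuck (a feasible candidate DT with sufficient deposit always exists and is selected) comes from Lemma~\ref{lemma: deposit constraint}. So it suffices to prove $\sum_i B_i\le |D_{max}|\cdot\mathrm{OPT}(G)$, and then chain the two inequalities.

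Next I would establish, for every $t_i$ and every feasible sequence of trees, that the slot-$t_i$ cost $w(T_i)+\alpha\, b_{T_i}+\beta\, rc_i$ is at least $\max_{d\in D_i} dist_G(s,d)+\alpha$. The reasoning is elementary: $T_i$ connects $s$ to every $d\in D_i$, so the $s$–$d$ subpath of $T_i$ has weight at least the graph shortest-path distance $dist_G(s,d)$, whence $w(T_i)\ge\max_{d\in D_i} dist_G(s,d)$; the tree contains the source $s$, which by the problem convention is always counted as a branch node, so $b_{T_i}\ge 1$; and $rc_i\ge 0$. Summing over all $t_i$ yields $\mathrm{OPT}(G)\ge\sum_i\big(\max_{d\in D_i} dist_G(s,d)+\alpha\big)$. (The degenerate case $D_i=\emptyset$ is harmless, since then $B_i=0$ as well.)

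Then I would bound each budget term directly from its definition: $B_i=\sum_{d\in D_i} dist_G(s,d)+\alpha|D_i|\le |D_i|\big(\max_{d\in D_i} dist_G(s,d)+\alpha\big)\le |D_{max}|\big(\max_{d\in D_i} dist_G(s,d)+\alpha\big)$, using $|D_i|\le |D_{max}|$. Summing over $i$ and applying the lower bound from the previous step gives $\sum_i B_i\le |D_{max}|\sum_i\big(\max_{d\in D_i} dist_G(s,d)+\alpha\big)\le |D_{max}|\cdot\mathrm{OPT}(G)$. Combined with Lemma~\ref{lemma: budget sum no less OBSTA}, this gives $OBSTA(G)\le\sum_i B_i\le |D_{max}|\cdot\mathrm{OPT}(G)$, which is exactly the statement that OBSTA is $|D_{max}|$-competitive.

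The genuine difficulty is already absorbed by Lemma~\ref{lemma: deposit constraint} and Lemma~\ref{lemma: budget sum no less OBSTA} — the delicate budget/deposit bookkeeping that keeps the deployed sequence feasible while never overspending $\sum_i B_i$. Given those, the only point that still needs care here is making the per-slot lower bound exactly the right strength: the ``$+\alpha$'' term is legitimate precisely because even a tree serving a single destination must retain $s$ as a Group-Table/branch node, and no sharper per-slot bound is needed because the loss incurred in replacing $\max_{d} dist_G(s,d)$ by $\sum_{d} dist_G(s,d)$ in $B_i$ is a factor of at most $|D_i|\le|D_{max}|$, which is exactly what the ratio pays for. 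It is also worth noting that the rerouting terms $rc_i$ never assist the lower bound (they are only nonnegative), so the whole argument is driven by the tree and branch costs, matching the way $B_i$ was defined.
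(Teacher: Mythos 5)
Your proposal is correct and follows essentially the same route as the paper: both arguments chain $OBSTA(G)\le\sum_i B_i$ (Lemma~\ref{lemma: budget sum no less OBSTA}) with the per-slot bound $B_i\le |D_i|\bigl(\alpha+\max_{d\in D_i}dist_G(s,d)\bigr)\le |D_{max}|\bigl(w(T_i^*)+\alpha\, b_{T_i}^*+\beta\, rc_i^*\bigr)$, using $w(T_i^*)\ge\max_{d\in D_i}dist_G(s,d)$, $b_{T_i}^*\ge 1$, and $rc_i^*\ge 0$. The only cosmetic difference is that the paper routes the distance comparison through $dist_{T_i^*}(s,d)\ge dist_G(s,d)$ while you lower-bound $w(T_i^*)$ directly, and you make the final chaining with Lemma~\ref{lemma: budget sum no less OBSTA} explicit where the paper leaves it implicit.
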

\begin{proof}
Let $\max_{d_i\in D_i} \{dist_{G}(s,d_{i})\}$ denote the maximum of all shortest paths from $s$ to $D_{i}$. 
\begin{align*}
\sum_{i}{B_{i}}&=\sum_{i}(\alpha \times |D_{i}|+\sum_{d\in D_{i}}dist_{G}(s,d))\\
&\leq \sum_{i}|D_{i}|(\alpha+\max_{d_i\in D_i} \{dist_{G}(s,d_{i})\})).
\end{align*}
Then, $\max_{d_i\in D_i} \{dist_{T_{i}^*}(s,d_{i})\}\geq \max_{d_i\in D_i} \{dist_{G}(s,d_{i})\}$ holds, where $T_{i}^*$ is the optimal tree at $t_i$, and 
\begin{align*}
\sum_{i}{B_{i}}&\leq \sum_{i}|D_{i}|(\alpha+\max_{d_i\in D_i} \{dist_{G}(s,d_{i})\})\\
&\leq \sum_{i}|D_{i}|(\alpha+\max_{d_i\in D_i} \{dist_{T_{i}^*}(s,d_{i})\})\\
&\leq |D_{max}|\sum_{i}(\alpha+\max_{d_i\in D_i} \{dist_{T_{i}^*}(s,d_{i})\})\\
&\leq |D_{max}|\sum_{i}(\alpha+w(T_{i}^*))\\
&\leq |D_{max}|\sum_{i}(w(T_{i}^*)+\alpha \times b_{T_i}^*+\beta \times rc_{i}^*),
\end{align*}
since $b_{T_i}^*\geq 1$. Also, $w(T_{i}^*)+\alpha \times b_{T_i}^*+\beta \times rc_{i}^*$ is the optimal cost at $t_i$, we have $\sum_{i}{B_{i}}\leq |D_{max}| \times OPT$.
\end{proof}

\noindent\textbf{Time complexity.}
Note that OBSTA takes time $O(|V|^2)$ to update the deposit, budget, rerouting cost, and potential rerouting cost, and examine whether the deposit is sufficient at each time slot. RT Construction takes time $O(|D|)$ to update the SI values of all nodes and then select the stable nodes (i.e., $SI(d)>H$) among them. Candidate DT Generation takes time $O(|V||D|)$, since it constructs at most $O(|D|)$ candidate trees and each of them needs time $O(|V|)$. For each candidate DT, Candidate DT Patching needs time {\color{black}$O(|V|^2|D|)$}, since it iteratively examines whether the deposit is sufficient after adding each of $O(|D|)$ stable nodes, where each addition and examination require time $O(|V|)$ and $O(|V|^2)$, respectively, and it then takes time $O(|V||D|+|D|^2)$, {\color{black}$O(|D|^2)$}, $O(|V||D|)$, $O(|V|^2)$, and $O(|V|^2)$ on graph construction, MST construction, graph reversion, SPT construction, and deposit update. Therefore, Candidate DT Patching takes time {\color{black}$O(|D|(|V|^2|D|))$}. Finally, Final DT Selection selects the best one among $O(|D|)$ candidate DTs, and thus the total complexity is {\color{black}$O(|V|^2|D|^2)$}. Note that the above result is worst-case analysis, while the next section indicates that OBSTA is very efficient for large networks with large groups (see Table \ref{t:running_time}).

\section{Performance Evaluation}\label{sec: evaluation}
We evaluate OBSTA by the simulation on real topologies with real multicast traces and implementation on an experimental SDN with HP SDN switches and YouTube traffic.

\subsection{Simulation Setup}
We compare OBSTA with SPT and ST\footnote{There are some single-tree static and dynamic multicast routing algorithms~\cite{UnicastTunnelForMulticast3}, \cite{reliableMulticastforSDN},  \cite{dynamic_multicast_mobile_ip}--\cite{dynamic_multicast_network_coding} with different purposes (such as QoS), but they are not included in this study because ST outperforms those approaches in terms of the bandwidth consumption, and the rerouting cost is not minimized in those approaches.} in two real networks, i.e., TataNld with 145 nodes/193 links and UsSignal from the Internet Topology Zoo with 63 nodes/78 links, as well as large synthetic networks with thousands of nodes to test the scalability. In the simulation, {\color{black}a link with the higher delay and higher loss rate (due to congestion) is assigned a larger weight \cite{LinkWeightSetting
}.} We set the packet loss rate of each link from 1\% to 10\% and link delay from 10 ms to 100 ms. Multicast join and leave dynamics are extracted from the trace file of SNDlib over 24 hours and divided into 195 time slots.\footnote{For large synthetic networks, the join and leave events are mapped to the synthetic nodes with the same IDs.}. $\alpha$ and $\beta$ are set to 0.1 and 0.6, respectively. We also change the duration of the time slots, the weight $\beta$ for rerouting, and the network size $|V|$. The performance metrics include the 1) total cost, 2) tree cost, 3) branch cost, and 4) rerouting cost defined in Section \ref{sec: related work}. We implement all algorithms in an HP DL580 server with four Intel Xeon E7-4870 2.4 GHz CPUs and 128 GB RAM. Each simulation result is averaged over 100 samples.

\subsection{Small Real Networks}
Fig. \ref{f:small_topo} compares OBSTA, SPT, and ST, and the tree costs (i.e., bandwidth consumption) of all approaches gradually increase since the joining events are much more than the leaving events. ST generates the smallest tree cost, which is much smaller than SPT. The tree cost of OBSTA is very close to ST, but the total rerouting costs of OBSTA are only 8.8\% of ST in TataNld and 9.3\% of ST in UsSignal, respectively, whereas SPT maintains shortest paths and thereby does not reroute and thereby is not shown in the figures. A larger rerouting cost incurs a higher volume of control messages and requires more CPU resources at both controller and switch sides~\cite{devoflow}. Moreover, the branch cost of OBSTA is much smaller than ST (at least 33\% reduction but not shown as a figure due to the space constraint) and thereby is more scalable for SDN. Therefore, OBSTA outperforms SPT and ST because it optimizes not only the tree cost and branch cost but also the rerouting cost at different time. 


\begin{figure}[t!]
\centering
\subfigure[]{
    \label{f:small_topo_a}
    \includegraphics[width = 4.2 cm]{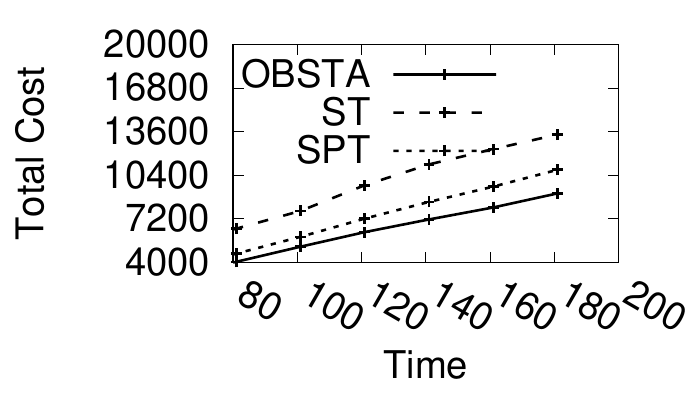}}
\subfigure[]{
    \label{f:small_topo_b}
    \includegraphics[width = 4.2 cm]{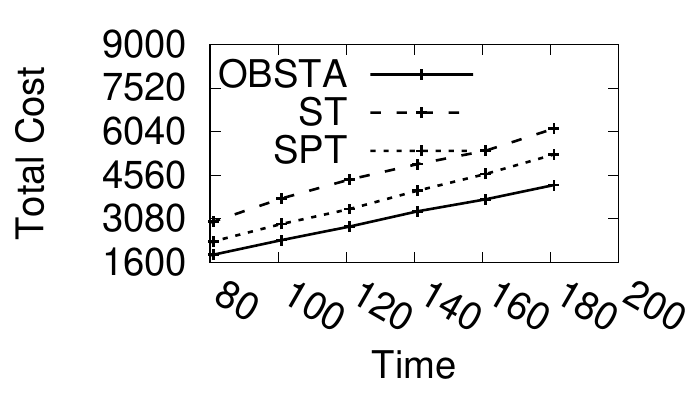}}
\\ \vspace{-4mm}
\subfigure[]{
    \label{f:small_topo_c}
    \includegraphics[width = 4.2 cm]{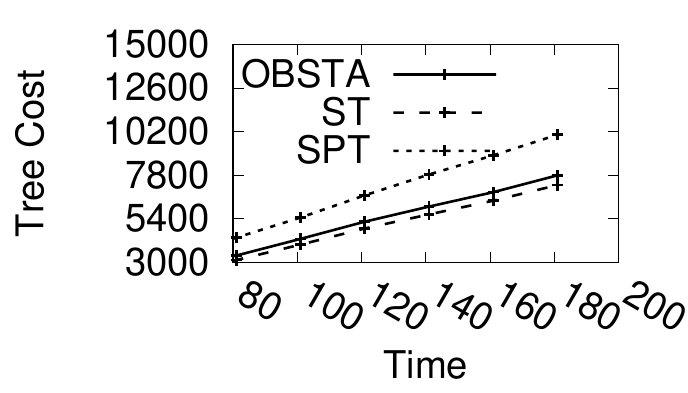}}
\subfigure[]{
    \label{f:small_topo_d}
    \includegraphics[width = 4.2 cm]{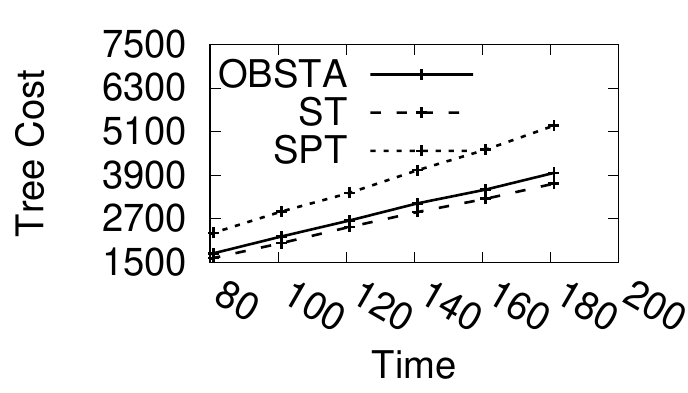}}
\\ \vspace{-4mm}
\subfigure[]{
    \label{f:small_topo_e}
    \includegraphics[width = 4.2 cm]{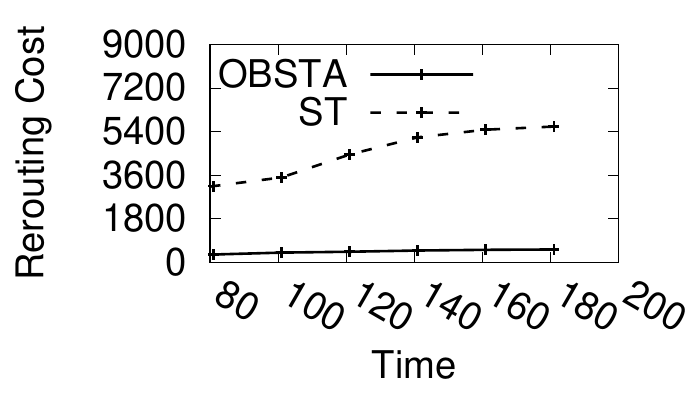}}
\subfigure[]{
    \label{f:small_topo_f}
    \includegraphics[width = 4.2 cm]{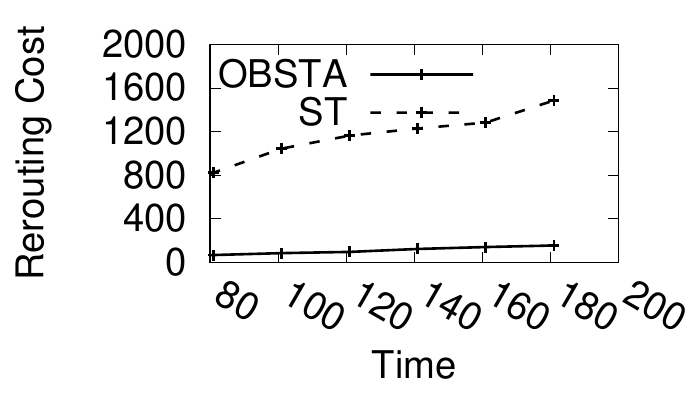}}
\caption{Simulation results for real networks. (a) Total cost for TataNld. (b) Total cost for UsSignal (c) Tree cost for TataNld. (d) Tree cost for UsSignal. (e) Rerouting cost for TataNld. (f) Rerouting cost for UsSignal.}
\label{f:small_topo}
\end{figure}

\subsection{Large Synthetic Networks}
Fig. \ref{f:large_topo} compares OBSTA, ST, and SPT in larger synthetic networks, where $|V|$ ranges from 1000 to 5000. The toal cost of OBSTA is 25\% less than SPT and ST, and the rerouting cost is only 9.8\% of ST since the temporal information is carefully examined in OBSTA. Compared with Fig. \ref{f:small_topo}, the advantage of OBSTA is more significant here, especially for rerouting costs. Minimizing the rerouting cost is more critical in a large SDN because the latency between a controller and switches increases, and high latency may deteriorate user QoE in video services. With a higher $\beta$ in Fig. \ref{f:large_topo_d}, the rerouting cost can be further reduced for various network sizes. {\color{black} We also change the duration of time slots from 10 to 40 events. Because OBSTA adjusts multicast tree less frequently with longer time slots, the tree costs are slightly increased by 5.5\%.} Table~\ref{t:running_time} summarizes the running time of OBSTA and ST (in brackets) with different $|D|$. In smaller networks ($|V| = 1000$ and $|V| = 2000$), the running time for OBSTA is less than 1 second. Although larger $|D|$ and $|T|$ indeed increase the computation time, OBSTA is still much more efficient than ST. Therefore, the above results manifest that OBSTA is more practical to be deployed in SDN controllers.


\begin{figure}[t!]
\centering
\subfigure[]{
    \label{f:large_topo_a}
    \includegraphics[width=4.2cm]{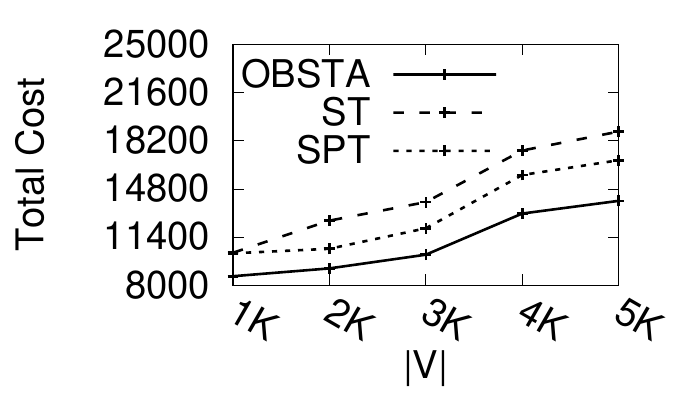}}
\subfigure[]{
    \label{f:large_topo_b}
    \includegraphics[width=4.2cm]{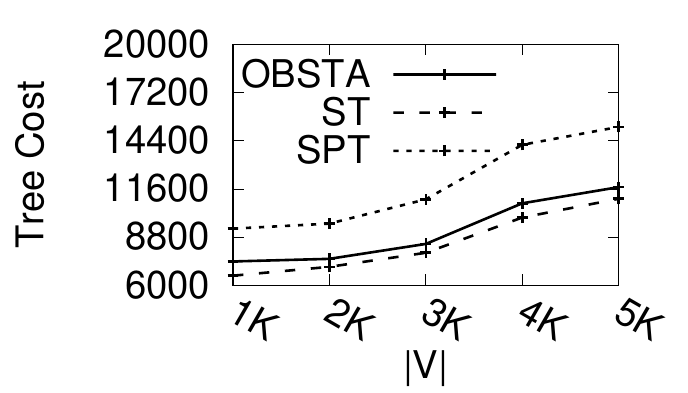}}
\\ \vspace{-4mm}
\subfigure[]{
    \label{f:large_topo_c}
    \includegraphics[width=4.2cm]{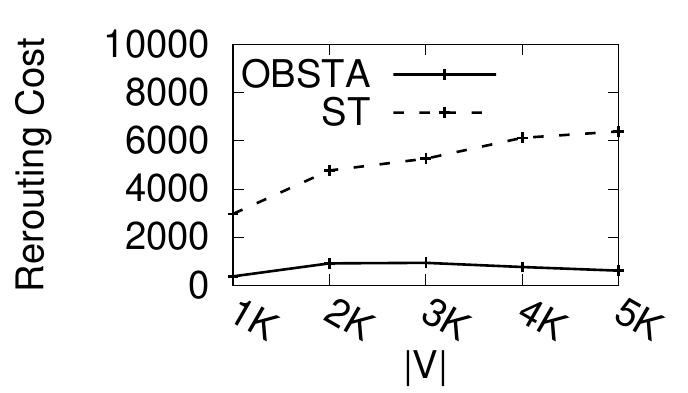}}
\subfigure[]{
    \label{f:large_topo_d}
    \includegraphics[width=4.2cm]{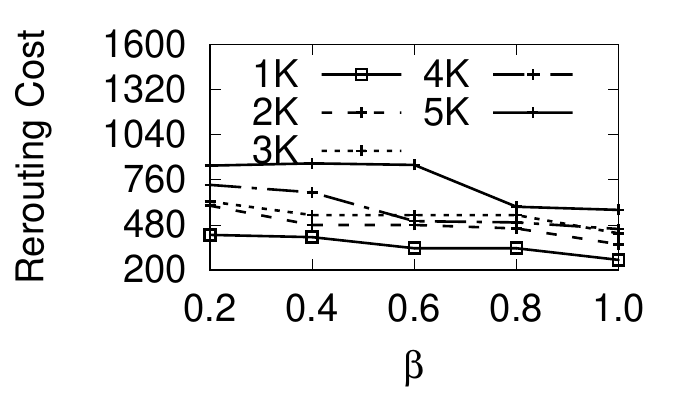}}
\caption{Simulation results for large synthetic networks. (a) Total cost with different $|V|$. (b) Tree cost with different $|V|$. (c) Rerouting cost with different $|V|$. (d) Total cost with different $\beta$ and $|V|$=5000.}
\label{f:large_topo}
\end{figure}


\begin{table}[t!]
\caption{The running time of OBSTA and ST (seconds per round)}
\begin{center}
\begin{tabular}{cccccc}
\hline
 $|D|$ & $|V|=1K$ & $|V|=2K$ & $|V|=3K$ & $|V|=4K$ & $|V|=5K$ \\ \hline
50 & 0.06(0.7) & 0.27(2.8) & 0.52(6) & 0.95(11.1) & 1.62(18.2) \\ \hline
100 & 0.13(1.4) & 0.55(5.9) & 1.03(12.7) & 1.91(23.3) & 3.27(38.2) \\ \hline
200 & 0.26(3) & 1.09(12.9) & 2.07(27.9) & 3.81(51.2) & 6.5(84) \\
\hline
\end{tabular}
\label{t:running_time}
\end{center}
\par
\end{table}

\subsection{Implementation}
We compare different algorithms in an experimental SDN with 20 HP Procurve 5406zl Openflow switches and 40 links, whereas OBSTA is implemented in an OpenDaylight SDN controller. A 191-second YouTube 4K video with 34.3 Mbps is first streaming to a proxy (since YouTube does not support multicast) and then multicasted to 15 destinations. Table~\ref{t:implementation} manifests that ST induces more rerouting and incurs much larger control overheads (20 times of OBSTA). A larger rerouting cost indeed increases the latency (more than 3 seconds) to install new rules into all involved switches for ST to complete the route setup. It may cause video stalling and deteriorate QoE. In contrast, the rerouting in OBSTA only induces 0.154 seconds. On the other hand, SPT produces larger multicast trees and consumes more network bandwidth. The result manifests that OBSTA is promising to support video services in SDN.

\section{Conclusion} \label{sec: conclusion}

Online multicast traffic engineering supporting IETF dynamic group membership is more practical but has not been studied for SDN. In this paper, we formulate Online Branch-aware Steiner Tree (OBST), to jointly minimize the bandwidth consumption, SDN multicast scalability, and rerouting overhead. We prove that OBST is NP-hard and does not have a $|D_{max}|^{1-\epsilon}$-competitive algorithm for any $\epsilon >0$, and the proposed $|D_{max}|$-competitive algorithm exploits the notion of the budget, the deposit and Reference Tree to achieve the tightest bound. 
The simulation and implementation on real SDNs with YouTube traffic manifest that
OBSTA can effectively reduce the total cost by 25\% compared with SPT and ST.

\begin{table}[t!]
\caption{Evaluation in the experimental SDN}
\begin{center}
\begin{tabular}{cccc}
\hline
 & OBSTA & ST & SPT \\
\hline
\#Ctrl. Msg. & 4 & 82 & 0 \\
\hline
Ctrl. Msg. Size & 592 B & 12136 B & 0 B \\
\hline
Ctrl. Latency & 0.154 s & 3.157 s & 0 s \\
\hline
BW Consumption & 3591 MB & 3375 MB & 4563 MB  \\
\hline
\end{tabular}
\label{t:implementation}
\end{center}
\end{table}

\bibliographystyle{IEEEtran}
\bibliography{mybibfile}
\appendices

\section{Pseudo-code}\label{sec: pseudo-code}

\begin{procedure} [h] \small
	\selectfont \caption{Sprouting $A\otimes (a_{1},a_{2},...,a_{n})$}
	\begin{algorithmic} [1]
		\REQUIRE  $A$, $(a_{1},a_{2},...,a_{n})$ and $s$
		\ENSURE  $(A_{1},A_{2},...,A_{n})$
		\STATE  $A_{0}\leftarrow \phi$
		\FORALL  {$1\leq i\leq n$}
			\STATE  $A_{i}\leftarrow A_{i-1}\cup P_{A}(s,a_{i})$
		\ENDFOR  
	\end{algorithmic}
\end{procedure}

\begin{procedure} [h] \small
	\selectfont \caption{Grafting $A\oplus B$}
	\begin{algorithmic} [1]
		\REQUIRE  $A$, $B$, $s$ and $d_{1}, d_{2}, ..., d_{n}$
		\ENSURE  $C$
		\STATE  $C\leftarrow A$
		\FORALL  {$1\leq i\leq n$}
			\STATE  $u \leftarrow d_{i}$
			\WHILE  {$u\notin A$}
				\STATE  $u\leftarrow u$'s parent in $B$.
			\ENDWHILE  
			\STATE  $C\leftarrow C\cup P_{B}(u,d_{i})$
		\ENDFOR  
	\end{algorithmic}
\end{procedure}

\begin{procedure} [h] \small
	\selectfont \caption{Reference Tree Construction}
	\begin{algorithmic} [1]
		\REQUIRE  $G=(V,E)$, $s$, $n$, $D$, $D_{1}$ to $D_{i}$ and $H$
		\ENSURE  $\tau_{i}$
		\FORALL  {$d\in D$}
			\STATE  $a_{d} \leftarrow$ the first arrival time slot of $d$
			\STATE  $h_{d} \leftarrow$ the duration in the group of $d$
			\IF {$h_{d}\leq \frac{n-a_{d}}{2}$}
				\STATE $SI(d)=\frac{2h_{d}}{n-a_{d}}$
			\ELSIF  {$h_{d}> \frac{n-a_{d}}{2}$}
				\STATE  $SI(d)=1$
			\ENDIF  
		\ENDFOR  
		\STATE  $\tau_i=|\{d| d\in D \wedge SI(d)> H\}|$
	\end{algorithmic}
\end{procedure}

\begin{procedure} [h] \small
	\selectfont \caption{Candidate Deployed Tree (DT) Generation}
	\begin{algorithmic} [1]
		\REQUIRE  $G=(V,E)$, $s$, $D_{i-1}$, $D_{i}$, $T_{i-1}$
		\ENSURE  $T_{i}^{l}$, $D_{i,join}^{l}$
		\STATE  $D_{i,join}\leftarrow D_{i}\backslash D_{i-1}$
		\STATE  $T_{i,leave}\leftarrow \cup_{d\in D_{i} \cap D_{i-1}} p_{T_{i-1}}(s,d)$
		\STATE  $T_{i}^{0}\leftarrow s$
		\STATE   $(q_{1},q_{2},q_{3},...,q_{k})\leftarrow $ the sorted sequence of destinations in a non-decreasing order of the between $s$ and the destination in $D_{i} \cap D_{i-1}$ in $T_{i-1}$.
		\STATE  $(T_{i}^{1},...,T_{i}^{k})\leftarrow T_{i,leave}\otimes (q_{1},q_{2},q_{3},...,q_{k})$
		\STATE  $D_{i,join}^{l}\leftarrow D_{i,join}\cup \cup_{j=l+1}^{k} q_{j}$
	\end{algorithmic}
\end{procedure}

\begin{procedure} [h] \small
	\selectfont \caption{Candidate Deployed Tree (DT) Patching}
	\begin{algorithmic} [1]
		\REQUIRE  $G=(V,E)$, $s$, $D_{i-1}$, $D_{i}$, $T_{i-1}$, $D_{i,stable}^{l}$, $T_{i}^{l}$, $D_{i,join}^{l}$
		\ENSURE  $T_{i}^{l}$
		\STATE  $k\leftarrow |D_{i,stable}^{l}|$
		\STATE  $(q_{1},q_{2},q_{3},...,q_{k})\leftarrow$ the sorted sequence of nodes $D_{i,stable}^{l}$ in a non-decreasing order of SI.
		\STATE  $(T_{i,stable}^{l_1},...,T_{i,stable}^{l_k})\leftarrow(RT_i(D_{i,stable}^{l})\otimes(q_{1},...,q_{k}))\oplus T_{i}^{l}$
		\STATE  $T_{i,stable}^{l}\leftarrow T_{i}^{l}$
		\FORALL  {$1\leq m\leq k$}
			\IF  {$T_{i,stable}^{l_m}$ has a sufficient deposit}
				\STATE  $T_{i,stable}^{l}\leftarrow T_{i,stable}^{l_m}$
			\ENDIF  
		\ENDFOR  
		\STATE  $D_{i,rest}^{l}\leftarrow D_{i}\setminus D_{i,stable}^{l}$
		\STATE  Build $T_{i,rest-MST}^{l}$
		\IF  {$T_{i,rest-MST}^{l}$ has a sufficient deposit}
			\STATE  $T_{i}^{l}\leftarrow T_{i,rest-MST}^{l}$
		\ELSE  
			\STATE  Build $T_{i,rest-SPT}^{l}$
			\IF  {$T_{i,rest-SPT}^{l}$ has a sufficient deposit}
				\STATE  $T_{i}^{l}\leftarrow T_{i,rest-SPT}^{l}$
			\ELSE  
				\STATE  Discards the candidate DT  $T_{i}^{l}$
			\ENDIF  
		\ENDIF  
	\end{algorithmic}
\end{procedure}

\begin{procedure} [h] \small
	\selectfont \caption{Final Deployed Tree (DT) Selection}
	\begin{algorithmic} [1]
		\REQUIRE  $T_{i}^{l}$, $dep_{i}^{l}$, $prc_{i}^{l}$ and $\gamma$
		\ENSURE  $T_{i}$
		\STATE  $g(T_{i}^{l})\leftarrow\gamma\times dep_{i}^{l}+(1-\gamma)\times prc_{i}^{l}$
		\STATE  $T_{i}\leftarrow \argmax_{T_{i}^{l}}\{g(T_{i}^{l})\}$
	\end{algorithmic}
\end{procedure}

\end{document}